\documentstyle[12pt]{article}

\textheight=21.5cm \textwidth=16.3cm \oddsidemargin -.1cm
\evensidemargin -.1cm \topmargin= -0.5cm \headsep 30pt


\arraycolsep 1pt

\font\twlgot =eufm10 scaled \magstep1 \font\egtgot =eufm8
\font\sevgot =eufm7 \font\twlmsb =msbm10 scaled \magstep1
\font\egtmsb =msbm8 \font\sevmsb =msbm7

\newfam\gotfam
\def\pgot{\fam\gotfam\twlgot}
\textfont\gotfam\twlgot \scriptfont\gotfam\egtgot
\scriptscriptfont\gotfam\sevgot
\def\got{\protect\pgot}
\newfam\msbfam
\textfont\msbfam\twlmsb \scriptfont\msbfam\egtmsb
\scriptscriptfont\msbfam\sevmsb
\def\Bbb{\protect\pBbb}
\def\pBbb{\relax\ifmmode\expandafter\Bb\else\typeout{You cann't use
Bbb in text mode}\fi}
\def\Bb #1{{\fam\msbfam\relax#1}}

\newcommand{\gO}{{\got O}}
\newcommand{\gQ}{{\got T}}
\newcommand{\gU}{{\got U}}
\newcommand{\gE}{{\got E}}
\newcommand{\gA}{{\got A}}

\newcommand{\gd}{{\got d}}
\newcommand{\gP}{{\got P}}
\newcommand{\gX}{{\got X}}
\newcommand{\gS}{{\got S}}

\def\thebibliography#1{\section*{References}\list
  {[\arabic{enumi}]}{\settowidth\labelwidth{#1}\leftmargin\labelwidth
    \advance\leftmargin\labelsep
    \usecounter{enumi}}
    \def\newblock{\hskip .11em plus .33em minus .07em}
    \sloppy\clubpenalty4000\widowpenalty4000
    \sfcode`\.=1000\relax}

\def\op#1{\mathop{\fam0 #1}\limits}

\newcommand{\Ker}{{\rm Ker\,}}

\newcommand{\nm}[1]{|{#1}|}
\newcommand{\beq}{\begin{equation}}
\newcommand{\eeq}{\end{equation}}
\newcommand{\ben}{\begin{eqnarray}}
\newcommand{\een}{\end{eqnarray}}
\newcommand{\be}{\begin{eqnarray*}}
\newcommand{\ee}{\end{eqnarray*}}
\newcommand{\bea}{\begin{eqalph}}
\newcommand{\eea}{\end{eqalph}}
\newcommand{\cA}{{\cal A}}

\newcommand{\cP}{{\cal P}}

\newcommand{\cL}{{\cal L}}

\newcommand{\cE}{{\cal E}}

\newcommand{\cQ}{{\cal Q}}

\newcommand{\cS}{{\cal S}}

\newcommand{\cO}{{\cal O}}

\newcommand{\cK}{{\cal K}}
\newcommand{\bL}{{\bf L}}

\newcommand{\bE}{{\bf E}}

\newcommand{\al}{\alpha}
\newcommand{\vr}{\varrho}

\newcommand{\dl}{\delta}
\newcommand{\la}{\lambda}
\newcommand{\La}{\Lambda}
\newcommand{\f}{\phi}
\newcommand{\om}{\omega}

\newcommand{\m}{\mu}

\newcommand{\g}{\gamma}
\newcommand{\G}{\Gamma}
\newcommand{\th}{\theta}

\newcommand{\vt}{\vartheta}
\newcommand{\vf}{\varphi}
\newcommand{\up}{\upsilon}

\newcommand{\si}{\sigma}
\newcommand{\Si}{\Sigma}
\newcommand{\w}{\wedge}
\newcommand{\wt}{\widetilde}
\newcommand{\wh}{\widehat}
\newcommand{\ol}{\overline}
\newcommand{\dr}{\partial}
\newcommand{\ar}{\op\longrightarrow}

\newcommand{\ot}{\otimes}
\newcommand{\ap}{\approx}

\newcounter{eqalph}
\newcounter{equationa}
\newcounter{remark}
\newcounter{example}
\newcounter{theorem}
\newcounter{proposition}
\newcounter{lemma}
\newcounter{corollary}
\newcounter{definition}
\setcounter{remark}{0} \setcounter{example}{0}
\setcounter{theorem}{0} \setcounter{proposition}{0}
\setcounter{lemma}{0} \setcounter{corollary}{0}
\setcounter{definition}{0}

\newenvironment{eqalph}{\stepcounter{equation}
\setcounter{equationa}{\value{equation}} \setcounter{equation}{0}

\begin{eqnarray}}{\end{eqnarray}\setcounter{equation}{\value{equationa}}}

\def\theremark{\arabic{remark}}

\def\thetheorem{\arabic{theorem}}

\newenvironment{proof}{
{\bf Proof.}}{\hfill $\Box$ \medskip}
\newenvironment{rem}{\refstepcounter{remark}\medskip\noindent{\bf
Remark \theremark.}}{\medskip}

\newenvironment{theo}{\refstepcounter{theorem}\medskip\noindent
{\bf Theorem \thetheorem.}\it }{\medskip}
\newenvironment{prop}{\refstepcounter{theorem}\medskip\noindent
{\bf Proposition \thetheorem.}\it }{\medskip}
\newenvironment{lem}{\refstepcounter{theorem}\medskip\noindent
{\bf Lemma \thetheorem.}\it }{\medskip}
\newenvironment{cor}{\refstepcounter{theorem}\medskip\noindent
{\bf Corollary \thetheorem.}\it }{\medskip}

\newcommand{\mar}[1]{}

\hyphenation{ma-ni-fold La-gran-gi-ans di-men-si-o-nal
-di-men-si-o-nal La-gran-gi-an Ha-mil-to-ni-an multi-symplec-tic}

\begin{document}
\hbox{}

\begin{center}

{\large\bf GRADED INFINITE ORDER JET MANIFOLDS}

\bigskip

{\sc G. SARDANASHVILY}
\medskip

\begin{small}

 {\it Department of Theoretical Physics,
Moscow State University, 117234 Moscow, Russia}


\end{small}
\end{center}

\bigskip

\begin{small}
\noindent  The relevant material on differential calculus on
graded infinite order jet manifolds and its cohomology is
summarized. This mathematics provides the adequate formulation of
Lagrangian theories of even and odd variables on smooth manifolds
in terms of the Grassmann-graded variational bicomplex.


\end{small}

\section{Introduction}

Let $Y\to X$ be a smooth fiber bundle and $J^\infty Y$ the
Fr\'echet manifold of infinite order jets of its sections.  The
differential calculus on $J^\infty Y$ and its cohomology provide
the adequate mathematical description of Lagrangian theories on
$Y\to X$ in terms of the variational bicomplex
\cite{ander,jmp,tak2}. This description has been extended to
Lagrangian theories on graded manifolds in terms of the
Grassmann-graded variational bicomplex of differential forms  on a
graded infinite order jet manifold \cite{barn,jmp05,jmp05a,cmp04}.

Different geometric models of odd variables are phrased in terms
both of graded manifolds and supermanifolds. Note that graded
manifolds are characterized by sheaves on smooth manifolds, while
supermanifolds are defined by gluing of sheaves on supervector
spaces \cite{bart,book05}. Treating odd variables on smooth
manifolds, we follow the Serre--Swan theorem for graded manifolds
(Theorem \ref{v0}). It states that a graded commutative
$C^\infty(X)$-ring is isomorphic to an algebra of graded functions
on a graded manifold with a body $X$ iff it is the exterior
algebra of some projective $C^\infty(X)$-module of finite rank. By
virtues of the Batchelor theorem \cite{bart}, any graded manifold
$(Z,\gA)$ with a body $Z$ and a structure sheaf $\gA$ of graded
functions is isomorphic to a graded manifold $(Z,\gA_Q)$ modelled
over some vector bundle $Q\to Z$, i.e., its structure sheaf
$\gA_Q$ is the sheaf of sections of the exterior bundle $\w Q^*$,
where $Q^*$ is the dual of $Q\to Z$. Our goal is the following
differential bigraded algebra (henceforth DBGA)
$\cS^*_\infty[F;Y]$ and its relevant cohomology.

Let $F\to Y\to X$ be a composite bundle where $F\to Y$ is a vector
bundle. Jet manifolds $J^rF$ of $F\to X$ are also vector bundles
over $J^rY$. Let $(J^rY,\gA_r)$ be a graded manifold modelled over
$J^rF\to J^rY$, and let $\cS^*_r[F;Y]$ be the DBGA of
Grassmann-graded differential forms on the graded manifold
$(J^rY,\gA_r)$. There is the inverse system of jet manifolds
\mar{j1}\beq
Y\op\longleftarrow^\pi J^1Y \longleftarrow \cdots J^{r-1}Y
\op\longleftarrow^{\pi^r_{r-1}} J^rY\longleftarrow\cdots.
\label{j1}
\eeq
Its projective limit $J^\infty Y$ is a paracompact Fr\'echet
manifold, called the infinite order jet manifold. This inverse
system yields the direct system of DBGAs
\mar{j2}\beq
\cS^*[F;Y]\ar^{\pi^*} \cS^*_1[F;Y]\ar\cdots \cS^*_{r-1}[F;Y]
\op\ar^{\pi^{r*}_{r-1}}\cS^*_r[F;Y]\ar\cdots, \label{j2}
\eeq
where $\pi^{r*}_{r-1}$ the pull-back monomorphisms. Its direct
limit is the above mentioned DBGA $\cS^*_\infty[F;Y]$ of all
Grassmann-graded differential forms on graded manifolds
$(J^rY,\gA_r)$ modulo the pull-back identification. On can think
of elements of $\cS^*_\infty[F;Y]$ as being Grassmann-graded
differential forms on a graded manifold $(J^\infty Y,
\gA_\infty)$, called the graded infinite order jet manifold, whose
body is $J^\infty Y$ and the structure sheaf $\gA_\infty$ is the
sheaf of germs of elements of $\cS^*_\infty[F;Y]$.

The DBGA $\cS^*_\infty[F;Y]$ is split into the above mentioned
Grassmann-graded variational bicomplex, describing Lagrangian
theories of even and odd variables on a smooth manifold $X$.
Grassmann-graded Lagrangians and their Euler--Lagrange operators
are elements of this bicomplex. Its cohomology results in the
global first variational formula, the first Noether theorem and
defines a class of variationally trivial Lagrangians.

It should be emphasized that this description of Grassmann-graded
Lagrangian systems differs from that phrased in terms of fibered
graded manifolds \cite{hern,mont05}, but reproduces the heuristic
formulation of Lagrangian BRST theory \cite{barn,bran01}. Namely,
$(J^\infty Y, \gA_\infty)$ is a graded manifold of jets of smooth
fiber bundles, but not jets of fibered graded manifolds.

\section{Technical preliminary}

Throughout the paper, smooth manifolds are real and
finite-dimensional. They are Hausdorff and second-countable
topological spaces (i.e., have a countable base for topology).
Consequently, they are paracompact, separable (i.e., have a
countable dense subset), and locally compact topological spaces,
which are countable at infinity. Unless otherwise stated, smooth
manifolds are assumed to be connected and, consequently, arcwise
connected. It is essential for our consideration that a
paracompact smooth manifold admits the partition of unity by
smooth functions. Real-analytic manifolds are also considered as
smooth ones because they need not possess the partition of unity
by real-analytic functions.

Only proper covers $\gU=\{U_\iota\}$ of smooth manifolds are
considered, i.e., $U_\iota\neq U_{\iota'}$ if $\iota\neq \iota'$.
A cover $\gU'$ is said to be a refinement of a cover $\gU$ if, for
each $U'\in\gU'$, there exists $U\in\gU$ such that $U'\subset U$.
For any cover $\gU$ of an $n$-dimensional smooth manifold $X$,
there exists a countable atlas $\{(U'_\iota,\vf_\iota)\}$ of $X$
such that: (i) the cover $\{U'_\iota\}$ refines $\gU$, (ii)
$\vf_\iota(U'_\iota)=\Bbb R^n$, and (iii) the closure $\ol
U'_\iota$ of any $U_\iota$ is compact \cite{greub}.

Let $\pi:Y\to X$ be a smooth fiber bundle. There exist the
following particular covers of $X$ which one can choose for its
bundle atlas \cite{greub}.

(i) There is a bundle atlas of $Y$ over a countable cover $\gU$ of
$X$ where each member $U_\iota$ of $\gU$ is a domain (i.e., a
contractible open subset) and its closure $\ol U_\iota$ is
compact.

(ii) There exists a bundle atlas of $Y$ over a finite cover of
$X$. Indeed, let $\Psi$ be a bundle atlas of $Y\to X$ over a cover
$\gU$ of $X$. For any cover $\gU$ of a manifold $X$, there exists
its refinement $\{U_{ij}\}$, where $j\in\Bbb N$ and $i$ runs
through a finite set such that $U_{ij}\cap U_{ik}=\emptyset$,
$j\neq k$. Let $\{(U_{ij}, \psi_{ij})\}$ be the corresponding
bundle atlas of a fiber bundle $Y\to X$. Then $Y$ has the finite
bundle atlas $U_i\op=\op\cup_j U_{ij}$, $
\psi_i(x)\op=\psi_{ij}(x)$, $x\in U_{ij}\subset U_i$, whose
members$U_i$ however need not be contractible and connected.

Without a loss of generality, we further assume that a cover $\gU$
for a bundle atlas of $Y\to X$ is also a cover for a manifold
atlas of its base $X$. Given such an atlas, a fiber bundle $Y$ is
provided with the associated bundle coordinates $(x^\la,y^i)$
where $(x^\la)$ are coordinates on $X$.

Given a manifold $X$, its tangent and cotangent bundles $TX$ and
$T^*X$ are endowed with the bundle coordinates $(x^\la,\dot
x^\la)$ and $(x^\la,\dot x_\la)$ with respect to holonomic frames
$\{\dr_\la\}$ and $\{dx^\la\}$, respectively. Given a smooth
bundle $Y\to X$, its vertical tangent and cotangent bundles $VY$
and $V^*Y$ are provided with the bundle coordinates
$(x^\la,y^i,\dot y^i)$ and $(x^\la, y^i,\ol y_i)$, respectively.

By $\La=(\la_1...\la_k)$, $|\La|=k$, $\la+\La=(\la\la_1...\la_k)$.
are denoted symmetric multi-indices. Summation over a multi-index
$\La$ means separate summation over each its index $\la_i$. The
notation
\mar{j10}\beq
d_\la = \dr_\la + \op\sum_{0\leq |\La|}
y^i_{\la+\La}\dr_i^\La,\qquad
 d_\La=d_{\la_r}\circ\cdots\circ d_{\la_1}, \label{j10}
\eeq
stands for the total derivatives.

\section{Finite order jet manifolds}

Given a smooth fiber bundle $Y\to X$, its $r$-order jet $j^r_xs$
 is defined as the equivalence class of sections $s$ of $Y$
identified by their $r+1$ terms of their Taylor series at a point
$x\in X$. The disjoint  union $J^rY=\op\bigcup_{x\in X}j^r_xs$ of
these jets is a smooth manifold provided with the adapted
coordinates
\be
(x^\la, y^i, y^i_\La)_{|\La|\leq r},\qquad  (x^\la, y^i_\La)\circ
s= (x^\la, \dr_\La s^i(x)), \qquad {y'}^i_{\la+\La}=\frac{\dr
x^\m}{\dr x'^\la}d_\m y'^i_\La.
\ee
For the sake of brevity, the index $r=0$ further stands for $Y$.
The jet manifolds of $Y\to X$ form the inverse system (\ref{j1})
where $\pi^r_{r-1}$, $r>0$, are affine bundles.

Given fiber bundles $Y$ and $Y'$ over $X$, every bundle morphism
$\Phi:Y\to Y'$ over a diffeomorphism $f$ of $X$ admits the
$r$-order jet prolongation to the morphism of the $r$-order jet
manifolds
\be
J^r\Phi: J^rY\ni j^r_xs\mapsto j^r_{f(x)}(\Phi\circ s\circ f^{-1})
\in  J^rY'.
\ee
If $\Phi$ is an injection or surjection, so is $J^r\Phi$. It
preserves an algebraic structure. If $Y\to X$ is a vector bundle,
$J^rY\to X$ is also a vector bundle. If $Y\to X$ is an affine
bundle modelled over a vector bundle $\ol Y\to X$, then $J^rY\to
X$ is an affine bundle modelled over the vector bundle $J^r\ol
Y\to X$.

Every section $s$ of a fiber bundle $Y\to X$ admits the $r$-order
jet prolongation  to the section $(J^rs)(x)= j^r_xs$ of the jet
bundle $J^rY\to X$.

Every exterior form $\f$ on the jet manifold $J^kY$ gives rise to
the pull-back form $\pi^{k+i}_k{}^*\f$ on the jet manifold
$J^{k+i}Y$. Let $\cO_k^*$ be the differential graded algebra
(henceforth DGA) of exterior forms on the jet manifold $J^kY$. We
have the direct system of DGAs
\mar{5.7}\beq
\cO^*X\op\longrightarrow^{\pi^*} \cO^*Y
\op\longrightarrow^{\pi^1_0{}^*} \cO_1^* \longrightarrow \cdots
\cO_{r-1}^*\op\longrightarrow^{\pi^r_{r-1}{}^*}
 \cO_r^* \longrightarrow\cdots. \label{5.7}
\eeq

Every projectable vector field $u=u^\m\dr_\m +u^i\dr_i$ on a fiber
bundle $Y\to X$ has the $k$-order jet prolongation onto $J^kY$ to
the vector field
\mar{j12}\beq
 j^k u = u^\la \dr_\la + u^i\dr_i + \op\sum_{0<|\La|\leq k}
[d_\La(u^i-y^i_\m u^\m)+y^i_{\m+\La}u^\m]. \label{j12}
\eeq

Jet manifold provides the conventional language of theory of
nonlinear differential equations and differential operators on
fiber bundles \cite{bry,kras}. A $k$-order differential equation
on a fiber bundle $Y\to X$ is defined as a closed subbundle $\gE$
of the jet bundle $J^kY\to X$. Its classical solution is a (local)
section $s$ of $Y\to X$ whose $k$-order jet prolongation $J^ks$
lives in $\gE$.

Differential equations can come from differential operators. Let
$E\to X$ be a vector bundle coordinated by $(x^\la,v^A)$,
$A=1,\ldots,m$. A bundle morphism $\cE: J^kY\to E$ over $X$ is
called a $k$-order differential operator on a fiber bundle $Y\to
X$. It sends each section $s$ of $Y\to X$ onto the section
$(\cE\circ J^ks)^A(x)$ of the vector bundle $E\to X$. Let us
suppose that the canonical zero section $\wh 0(X)$ of the vector
bundle $E\to X$ belongs to $\cE(J^kY)$. Then the kernel of a
differential operator $\cE$ is defined as $\Ker\cE =\cE^{-1}(\wh
0(X))\subset J^kY$. If $\Ker\cE$ is a closed subbundle of the jet
bundle $J^kY\to X$, it is a $k$-order differential equation,
associated to the differential operator $\cE$. For instance, the
kernel of an Euler--Lagrange operator need not be a closed
subbundle. Therefore, it may happen that associated
Euler--Lagrange equations are not a differential equation in a
strict sense.

\section{Infinite order jet manifold}

Given the inverse system (\ref{j1}) of jet manifolds, its
projective limit $J^\infty Y$ is defined as a minimal set such
that there exist surjections
\mar{5.74}\beq
\pi^\infty: J^\infty Y\to X, \quad \pi^\infty_0: J^\infty Y\to Y,
\quad \quad \pi^\infty_k: J^\infty Y\to J^kY, \label{5.74}
\eeq
obeying the commutative diagrams
$\pi^\infty_r=\pi^k_r\circ\pi^\infty_k$ for any admissible $k$ and
$r<k$. A projective limit of the inverse system (\ref{j1}) always
exists. It consists of those elements $(\ldots,z_r,\ldots,z_k,
\ldots)$, $z_r\in J^rY$, $z_k\in J^kY$, of the Cartesian product
$\op\prod_k J^kY$ which obey the relations $z_r=\pi^k_r(z_k)$ for
all $k>r$. One can think of elements of $J^\infty Y$ as being
infinite order jets of sections of $Y\to X$ identified by their
Taylor series at points of $X$.

The set $J^\infty Y$ is provided with the projective limit
topology. This is the coarsest topology such that the surjections
$\pi^\infty_r$ (\ref{5.74}) are continuous. Its base consists of
inverse images of open subsets of $J^rY$, $r=0,\ldots$, under the
mappings $\pi^\infty_r$. With this topology, $J^\infty Y$ is a
paracompact Fr\'echet (complete metrizable, but not Banach)
manifold modelled on a locally convex vector space of formal
number series $\{a^\la,a^i,a^i_\la,\cdots\}$ \cite{tak2}.
Moreover, the surjections $\pi^\infty_r$ are open maps, i.e,
$J^\infty Y\to J^rY$ are topological bundles. A bundle coordinate
atlas $\{U_Y,(x^\la,y^i)\}$ of $Y\to X$ provides $J^\infty Y$ with
the manifold coordinate atlas
\mar{j3}\beq
\{(\pi^\infty_0)^{-1}(U_Y), (x^\la, y^i_\La)\}_{0\leq|\La|},
\qquad {y'}^i_{\la+\La}=\frac{\dr x^\m}{\dr x'^\la}d_\m y'^i_\La.
\label{j3}
\eeq

It is essential for our consideration that $Y$ is a strong
deformation retract of $J^\infty Y$ \cite{ander,jmp} (see Appendix
A). This result follows from the fact that a base of any affine
bundle is a strong deformation retract of its total space.
Consequently, a fiber bundle $Y$ is a strong deformation retract
of any finite order jet manifold $J^rY$. Therefore by virtue of
the Vietoris--Begle theorem \cite{bred}, there are isomorphisms
\mar{j19}\beq
H^*(J^\infty Y,\Bbb R)=H^*(J^rY,\Bbb R)=H^*(Y,\Bbb R) \label{j19}
\eeq
of cohomology groups of $J^\infty Y$, $J^rY$, $0<r$ and $Y$ with
coefficients in the constant sheaf $\Bbb R$.

Though $J^\infty Y$ fails to be a smooth manifold, one can
introduce the differential calculus on $J^\infty Y$ as follows.
Let us consider the direct system (\ref{5.7}) of DGAs. Its direct
limit $\cO^*_\infty$ exists, and consists of all exterior forms on
finite order jet manifolds modulo the pull-back identification. It
is a DGA, inheriting the DGA operations of $\cO^*_r$
\cite{massey}.

\begin{theo} \label{j4} \mar{j4} The cohomology $H^*(\cO_\infty^*)$ of
the de Rham complex
\mar{5.13} \beq
0\longrightarrow \Bbb R\longrightarrow \cO^0_\infty
\op\longrightarrow^d\cO^1_\infty \op\longrightarrow^d \cdots
\label{5.13}
\eeq
of the DGA $\cO^*_\infty$ equals the de Rham cohomology of a fiber
bundle $Y$ \cite{ander}.
\end{theo}

\begin{proof}
By virtue of the well-known theorem, the operation of taking
homology groups of cochain complexes commutes with the passage to
a direct limit \cite{massey}. Since the DGA $\cO^*_\infty$ is a
direct limit of DGAs $\cO^*_r$, its cohomology is isomorphic to
the direct limit of the direct system
\mar{j5}\beq
 H^*_{DR}(Y)\longrightarrow H^*_{DR}(J^1Y)\longrightarrow
\cdots H^*_{DR}(J^{r-1}Y)\longrightarrow
H^*_{DR}(J^rY)\longrightarrow \cdots \label{j5}
\eeq
of the de Rham cohomology groups $H^*_{DR}(J^rY)=H^*(\cO^*_r)$ of
finite order jet manifolds $J^rY$. By virtue of the de Rham
theorem \cite{hir}, the de Rham cohomology $H^*_{DR}(J^rY)$ of
$J^rY$ equals its cohomology $H^*(J^rY,\Bbb R)$ with coefficients
in the constant sheaf $\Bbb R$. Since $Y$ is a strong deformations
retract of $J^rY$, this cohomology coincides with the cohomology
$H^*(Y,\Bbb R)$ of $Y$. Consequently, the direct limit of the
direct system (\ref{j5}) is the de Rham cohomology $H^*(Y,\Bbb
R)=H_{DR}^*(Y)$ of $Y$.
\end{proof}

\begin{cor} \label{j21} \mar{j21}
Any closed form $\f\in \cO^*_\infty$ is decomposed into the sum
$\f=\si +d\xi$, where $\si$ is a closed form on $Y$.
\end{cor}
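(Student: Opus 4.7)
The plan is to deduce this corollary directly from Theorem \ref{j4}, whose content is precisely that every de Rham cohomology class of $\cO^*_\infty$ has a representative coming from $Y$. I would first identify explicitly the isomorphism asserted by that theorem: it arises as the direct limit of the pull-back maps $\pi^{r*}:\cO^*(Y)\to \cO^*_r$ in the system (\ref{j5}), so on the level of $\cO^*_\infty$ the isomorphism is induced by $\pi^{\infty*}_0:\cO^*(Y)\hookrightarrow \cO^*_\infty$ (after the usual pull-back identification that lets us view forms on $Y$ as elements of $\cO^*_\infty$).

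Given a closed form $\f\in \cO^*_\infty$, its cohomology class $[\f]\in H^*(\cO^*_\infty)$ then lies in the image of $H^*_{DR}(Y)$ under this isomorphism. Hence there exists a closed form $\si$ on $Y$ whose pull-back $\pi^{\infty*}_0\si$ is cohomologous to $\f$ in the complex (\ref{5.13}). Writing out what this means, one obtains some $\xi\in\cO^{*-1}_\infty$ with
\begin{equation*}
\f - \pi^{\infty*}_0\si = d\xi,
\end{equation*}
which, under the standing identification of $\si$ with its pull-back, becomes the desired decomposition $\f=\si+d\xi$.

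The only potentially delicate point is to verify that the isomorphism in Theorem \ref{j4} is indeed realized by pull-back from $Y$, rather than merely being an abstract isomorphism. This is where the strong deformation retraction of $J^\infty Y$ onto $Y$ (cited before (\ref{j19})) does the work: it forces the pull-back $\pi^{\infty*}_0$ to be an isomorphism on cohomology, since composing with the embedding $Y\hookrightarrow J^\infty Y$ gives the identity on $H^*_{DR}(Y)$, and the direct-limit description in the proof of Theorem \ref{j4} is precisely the limit of such pull-backs. Beyond this identification, no further computation is required, the result being a purely formal consequence of surjectivity of the map $H^*_{DR}(Y)\to H^*(\cO^*_\infty)$.
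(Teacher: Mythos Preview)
Your proposal is correct and is precisely the argument the paper has in mind: the corollary is stated without proof immediately after Theorem \ref{j4}, and the intended derivation is exactly the surjectivity of the pull-back map $H^*_{DR}(Y)\to H^*(\cO^*_\infty)$ that you spell out. Your care in checking that the isomorphism of Theorem \ref{j4} is realized by the pull-back $\pi^{\infty*}_0$ (rather than being merely abstract) is appropriate and follows from the direct-limit description in that proof, so nothing further is needed.
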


One can think of elements of $\cO_\infty^*$ as being differential
forms on the infinite order jet manifold $J^\infty Y$ as follows.
Let $\gO^*_r$ be the sheaf of germs of exterior forms on  $J^rY$
and $\ol\gO^*_r$ the canonical presheaf of local sections of
$\gO^*_r$ (we follow the terminology of \cite{hir}). Since
$\pi^r_{r-1}$ are open maps, there is the direct  system of
presheaves
\be
\ol\gO^*_0 \op\longrightarrow^{\pi^1_0{}^*} \ol\gO_1^* \cdots
\op\longrightarrow^{\pi^r_{r-1}{}^*}
 \ol\gO_r^* \longrightarrow\cdots.
\ee
Its direct limit $\ol\gO^*_\infty$ is a presheaf of DGAs on
$J^\infty Y$. Let $\gQ^*_\infty$ be the sheaf of DGAs of germs of
$\ol\gO^*_\infty$ on $J^\infty Y$. The structure module
$\cQ^*_\infty=\G(\gQ^*_\infty)$ of global sections of
$\gQ^*_\infty$ is a DGA such that, given an element $\f\in
\cQ^*_\infty$ and a point $z\in J^\infty Y$, there exist an open
neighbourhood $U$ of $z$ and an exterior form $\f^{(k)}$ on some
finite order jet manifold $J^kY$ so that $\f|_U=
\pi^{\infty*}_k\f^{(k)}|_U$. Therefore, there is the DGA
monomorphism $\cO^*_\infty  \to\cQ^*_\infty$. It should be
emphasized that the paracompact space $J^\infty Y$ admits a
partition of unity by elements of the ring $\cQ^0_\infty$, but not
$\cO^0_\infty$.

Since elements of the DGA $\cQ^*_\infty$ are locally exterior
forms on finite order jet manifolds, the following Poincar\'e
lemma holds.

\begin{lem} \label{j8} \mar{j8}
For closed element $\f\in \cQ^*_\infty$, there exists a
neighbourhood $U$ of each point $z\in J^\infty Y$ such that
$\f|_U$ is exact.
\end{lem}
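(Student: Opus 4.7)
The plan is to reduce the statement to the classical Poincar\'e lemma on a finite-dimensional smooth manifold $J^kY$ by using the defining local property of elements of $\cQ^*_\infty$. Fix a point $z\in J^\infty Y$ and a closed form $\f\in \cQ^*_\infty$. By the characterization of $\cQ^*_\infty$ recalled just before the statement, there exist an open neighbourhood $U$ of $z$, an integer $k$, and an exterior form $\f^{(k)}$ on some open subset $V\subset J^kY$ containing $\pi^\infty_k(z)$, such that $\f|_U=\pi^{\infty*}_k\f^{(k)}|_U$. Shrinking $U$ if necessary, I may take $V=\pi^\infty_k(U)$, which is open because $\pi^\infty_k$ is an open map.

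The next step is to promote the closedness of $\f$ to the closedness of $\f^{(k)}$ on $V$. Since the exterior differential on $\cQ^*_\infty$ restricts, locally, to the exterior differential on the finite-order representatives, $d\f=0$ on $U$ reads $\pi^{\infty*}_k(d\f^{(k)})=0$ on $U$. To conclude $d\f^{(k)}=0$ on $V$, I would use that $\pi^\infty_k:J^\infty Y\to J^kY$ is a topological bundle admitting local sections: for any $p\in V$, pick $z'\in U$ with $\pi^\infty_k(z')=p$, a local section $s$ of $\pi^\infty_k$ with $s(p)=z'$ and image in $U$, and compute $d\f^{(k)}=s^*\pi^{\infty*}_k(d\f^{(k)})=0$ on a neighbourhood of $p$. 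Hence $\f^{(k)}$ is a closed smooth form on the finite-dimensional manifold $V$.

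Now apply the classical Poincar\'e lemma on $J^kY$: there exist an open neighbourhood $V'\subset V$ of $\pi^\infty_k(z)$ and an exterior form $\psi^{(k)}$ on $V'$ with $d\psi^{(k)}=\f^{(k)}|_{V'}$. Set $U'=(\pi^\infty_k)^{-1}(V')\cap U$, an open neighbourhood of $z$ in $J^\infty Y$, and $\psi=\pi^{\infty*}_k\psi^{(k)}\in\cQ^*_\infty(U')$. Then $d\psi=\pi^{\infty*}_k\f^{(k)}=\f|_{U'}$, which gives the required local exactness.

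The one substantive point, and the only real obstacle, is the injectivity step in the second paragraph: verifying that a form on an open set of $J^kY$ whose pull-back to the corresponding open set of $J^\infty Y$ vanishes must itself vanish. Everything else is either bookkeeping with the definition of $\cQ^*_\infty$ or a direct invocation of the classical Poincar\'e lemma on a finite-dimensional manifold. This injectivity is guaranteed by the existence of local sections of the surjective projection $\pi^\infty_k$, which in turn follows from its description as a projective limit of affine bundles in the inverse system (\ref{j1}).
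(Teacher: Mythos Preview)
Your proof is correct and is precisely the argument the paper has in mind: the paper does not give a formal proof of Lemma \ref{j8} but only the one-line justification preceding it, namely that elements of $\cQ^*_\infty$ are locally pull-backs of exterior forms on finite order jet manifolds, so the classical Poincar\'e lemma applies. Your write-up simply makes this explicit, and the injectivity step you flag is exactly the content of the paper's remark that the maps $\pi^{r*}_{r-1}$ in the direct system (\ref{5.7}) are monomorphisms (the projections $\pi^r_{r-1}$ being affine bundles, hence admitting local sections).
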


\begin{theo} \label{j6} \mar{j6} The cohomology $H^*(\cQ_\infty^*)$ of
the de Rham complex
\mar{5.13'} \beq
0\longrightarrow \Bbb R\longrightarrow \cQ^0_\infty
\op\longrightarrow^d\cQ^1_\infty \op\longrightarrow^d \cdots\,.
\label{5.13'}
\eeq
of the DGA $\cQ^*_\infty$ equals the de Rham cohomology of a fiber
bundle $Y$ \cite{tak2}.
\end{theo}

\begin{proof}
Let us consider the de Rham complex of sheaves
\mar{j7} \beq
0\longrightarrow \Bbb R\longrightarrow \gQ^0_\infty
\op\longrightarrow^d\gQ^1_\infty \op\longrightarrow^d \cdots
\label{j7}
\eeq
on $J^\infty Y$. By virtue of Lemma \ref{j8}, it is exact at all
terms, except $\Bbb R$. Being the sheaves of
$\cQ^0_\infty$-modules, the sheaves $\gQ^r_\infty$ are fine and,
consequently acyclic because the paracompact space $J^\infty Y$
admits the partition of unity by elements of the ring
$\cQ^0_\infty$ \cite{hir}. Thus, the complex (\ref{j7}) is a
resolution of the constant sheaf $\Bbb R$ on $J^\infty Y$. In
accordance with the abstract de Rham theorem (see Appendix B),
cohomology $H^*(\cQ_\infty^*)$ of the complex (\ref{5.13'}) equals
the cohomology $H^*(J^\infty Y, \Bbb R)$ of $J^\infty Y$ with
coefficients in the constant sheaf $\Bbb R$. Since $Y$ is a strong
deformation retract of $J^\infty Y$, we obtain
\be
H^*(\cQ_\infty^*)=H^*(J^\infty Y, \Bbb R)=H^*(Y, \Bbb
R)=H^*_{DR}(Y).
\ee
\end{proof}

Due to the monomorphism $\cO^*_\infty  \to\cQ^*_\infty$, one can
restrict $\cO^*_\infty$ to the coordinate chart (\ref{j3}) where
horizontal forms $dx^\la$ and contact one-forms
$\th^i_\La=dy^i_\La -y^i_{\la+\La}dx^\la$ make up a local basis
for the $\cO^0_\infty$-algebra $\cO^*_\infty$. Though $J^\infty Y$
is not a smooth manifold, elements of $\cO^*_\infty$ are exterior
forms on finite order jet manifolds and, therefore, their
coordinate transformations are smooth. Moreover, there is the
canonical decomposition $\cO^*_\infty=\oplus\cO^{k,m}_\infty$ of
$\cO^*_\infty$ into $\cO^0_\infty$-modules $\cO^{k,m}_\infty$ of
$k$-contact and $m$-horizontal forms together with the
corresponding projectors
\be
h_k:\cO^*_\infty\to \cO^{k,*}_\infty, \qquad h^m:\cO^*_\infty\to
\cO^{*,m}_\infty.
\ee
Accordingly, the exterior differential on $\cO_\infty^*$ is split
into the sum $d=d_H+d_V$ of the total and vertical differentials
\be
&& d_H\circ h_k=h_k\circ d\circ h_k, \qquad d_H\circ
h_0=h_0\circ d, \qquad d_H(\f)= dx^\la\w d_\la(\f), \\
&& d_V \circ h^m=h^m\circ d\circ h^m, \qquad d_V(\f)=\th^i_\La \w
\dr^\La_i\f, \qquad \f\in\cO^*_\infty,
\ee
such that $d_H\circ d_H=0$, $d_V\circ d_V=0$, $d_H\circ
d_V+d_V\circ d_H=0$. These differentials make $\cO^{*,*}_\infty$
into a bicomplex.

Let $\vt\in\gd\cO^0_\infty$ be the $\cO^0_\infty$-module of
derivations of the $\Bbb R$-ring $\cO^0_\infty$.

\begin{prop} \label{g62} \mar{g62}
The derivation module $\gd\cO^0_\infty$ is isomorphic to the
$\cO^0_\infty$-dual $(\cO^1_\infty)^*$ of the module of one-forms
$\cO^1_\infty$.
\end{prop}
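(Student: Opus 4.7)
The plan is to exhibit mutually inverse $\cO^0_\infty$-linear isomorphisms $\mu : \gd\cO^0_\infty \to (\cO^1_\infty)^*$ and $\nu : (\cO^1_\infty)^* \to \gd\cO^0_\infty$. The definition of $\nu$ is immediate: given an $\cO^0_\infty$-linear form $\phi$ on $\cO^1_\infty$, set $\nu(\phi)(f) := \phi(df)$ for $f \in \cO^0_\infty$. The Leibniz rule for $\nu(\phi)$ follows at once from $d(fg) = f\,dg + g\,df$ together with the $\cO^0_\infty$-linearity of $\phi$, so $\nu(\phi) \in \gd\cO^0_\infty$, and the assignment $\phi \mapsto \nu(\phi)$ is visibly $\cO^0_\infty$-linear.

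For $\mu$ I would use the local structure of $\cO^1_\infty$. On each adapted coordinate chart (\ref{j3}), every element of $\cO^1_\infty$ is a finite $\cO^0_\infty$-combination of the basis one-forms $dx^\la$ and $dy^i_\La$, $|\La|\geq 0$. Given $\vt \in \gd\cO^0_\infty$, I would define $\mu(\vt)$ locally by $\mu(\vt)(dx^\la) := \vt(x^\la)$ and $\mu(\vt)(dy^i_\La) := \vt(y^i_\La)$, extended $\cO^0_\infty$-linearly. The central point is chart-independence: under a change of adapted coordinates the transformation law in (\ref{j3}) expresses $dy^i_\La$ as a finite $\cO^0_\infty$-combination of the primed basis one-forms via the chain rule, and applying $\vt$ to the coordinate functions $y^i_\La$---using only that $\vt$ is an $\Bbb R$-derivation---yields the same combination of $\vt({x'}^\m)$ and $\vt({y'}^j_{\La'})$. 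Hence $\mu(\vt)$ glues to a globally defined element of $(\cO^1_\infty)^*$.

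To close the argument, I would verify the compatibility $\mu(\vt)(df) = \vt(f)$ for every $f \in \cO^0_\infty$. Realizing $f$ as a smooth function on some $J^rY$ and expanding $df = \dr_\la f\,dx^\la + \dr^\La_i f\,dy^i_\La$ gives $\mu(\vt)(df) = \dr_\la f \cdot \vt(x^\la) + \dr^\La_i f \cdot \vt(y^i_\La)$; the identical formula for $\vt(f)$ is the standard Hadamard-lemma consequence of the Leibniz rule applied to the $\Bbb R$-derivation $\vt$ on the smooth function ring $C^\infty(J^rY)$, which is a direct summand appearing in the direct limit $\cO^0_\infty$. Granted this, $\nu(\mu(\vt))(f) = \mu(\vt)(df) = \vt(f)$ gives $\nu\circ\mu = \id$, while $\mu\circ\nu = \id$ follows because $\mu(\nu(\phi))$ and $\phi$ agree on every basis element $dx^\la$, $dy^i_\La$.

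The principal obstacle is the chart-independence step in the construction of $\mu$. Although conceptually transparent---both $dy^i_\La$ and $\vt(y^i_\La)$ transform by the same chain rule---the bookkeeping requires care because the multi-index $\La$ in (\ref{j3}) runs through all orders, so one must confirm that the infinite family of jet coordinate transformation relations intertwines consistently with the action of $\vt$ on the new coordinate functions.
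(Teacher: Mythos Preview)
Your argument is correct, but the route differs from the paper's. The paper does not work chart-by-chart: it first establishes, via the Serre--Swan theorem on each finite-order jet manifold $J^rY$, that every element of $\cO^1_\infty$ is a \emph{finite} $\cO^0_\infty$-linear combination of exact one-forms $df$, $f\in\cO^0_\infty$. With this generating set in hand, the map you call $\mu$ is defined globally by the single rule $df\mapsto\vt(f)$, and injectivity of $\nu$ is read off directly from the fact that distinct functionals on $\cO^1_\infty$ already differ on some $df$. Your construction instead fixes local frames $\{dx^\la,dy^i_\La\}$ and glues; the chart-independence you flag as the ``principal obstacle'' is then exactly the Hadamard-lemma identity $\vt(f)=\dr_\la f\cdot\vt(x^\la)+\dr^\La_i f\cdot\vt(y^i_\La)$ applied to the transition functions, so your two verification steps are really one. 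The paper's approach has the advantage of being coordinate-free and of yielding, as a by-product, that $\cO^*_\infty$ is the \emph{minimal} Chevalley--Eilenberg differential calculus over $\cO^0_\infty$ (the remark immediately following the paper's proof); your approach is more elementary in that it avoids invoking Serre--Swan, trading that theorem for an explicit gluing argument.
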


\begin{proof} At first, let us show that $\cO^*_\infty$ is generated by elements
$df$, $f\in \cO^0_\infty$. It suffices to justify that any element
of $\cO^1_\infty$ is a finite $\cO^0_\infty$-linear combination of
elements $df$, $f\in \cO^0_\infty$. Indeed, every
$\f\in\cO^1_\infty$ is an exterior form on some finite order jet
manifold $J^rY$.  By virtue of the Serre--Swan theorem extended to
non-compact manifolds \cite{book05,ren}, the
$C^\infty(J^rY)$-module $\cO^1_r$ of one-forms on $J^rY$ is a
projective module of finite rank, i.e., $\f$ is represented by a
finite $C^\infty(J^rY)$-linear combination of elements $df$, $f\in
C^\infty(J^rY)\subset \cO^0_\infty$.  Any element $\Phi\in
(\cO^1_\infty)^*$ yields a derivation $\vt_\Phi(f)=\Phi(df)$ of
the $\Bbb R$-ring $\cO^0_\infty$. Since the module $\cO^1_\infty$
is generated by elements $df$, $f\in \cO^0_\infty$, different
elements of $(\cO^1_\infty)^*$ provide different derivations of
$\cO^0_\infty$, i.e., there is a monomorphism $(\cO^1_\infty)^*\to
\gd\cO^0_\infty$. By the same formula, any derivation $\vt\in
\gd\cO^0_\infty$ sends $df\mapsto \vt(f)$ and, since
$\cO^0_\infty$ is generated by elements $df$, it defines a
morphism $\Phi_\vt:\cO^1_\infty\to \cO^0_\infty$. Moreover,
different derivations $\vt$ provide different morphisms
$\Phi_\vt$. Thus, we have a monomorphism and, consequently, an
isomorphism $\gd\cO^0_\infty\to (\cO^1_\infty)^*$.
\end{proof}

The proof of Proposition \ref{g62} gives something more. The DGA
$\cO^*_\infty$ is a minimal Chevalley--Eilenberg differential
calculus over the $\Bbb R$-ring $\cO^0_\infty$ of smooth real
functions on finite order jet manifolds of $Y\to X$.

\begin{rem}
Let $\cK$ be a commutative ring and $\cA$ a commutative
$\cK$-ring. The module $\gd\cA$ of derivations of $\cA$ is a Lie
$\cK$-algebra. The Chevalley--Eilenberg complex of the Lie algebra
$\gd\cA$ with coefficients in the ring $\cA$ contains a subcomplex
of $\cA$-multilinear skew-symmetric maps \cite{book05}. It is
called the Chevalley--Eilenberg differential calculus over a
$\cK$-ring $\cA$. The minimal Chevalley--Eilenberg calculus is
generated by monomials $a_0da_1\w\cdots\w da_k$, $a_i\in\cA$. For
instance, the DGA of exterior forms on a smooth manifold $Z$ is
the minimal Chevalley--Eilenberg differential calculus over the
$\Bbb R$-ring $C^\infty(Z)$.
\end{rem}

Restricted to a coordinate chart (\ref{j3}), $\cO^1_\infty$ is a
free $\cO^0_\infty$-module generated by the exterior forms
$dx^\la$, $\th^i_\La$. Since $\gd\cO^0_\infty=(\cO^1_\infty)^*$,
any derivation of the $\Bbb R$-ring $\cO^0_\infty$ takes the
coordinate form
\mar{g3}\beq
\vt=\vt^\la \dr_\la + \vt^i\dr_i + \op\sum_{0<|\La|}\vt^i_\La
\dr^\La_i, \label{g3}
\eeq
where $\dr^\La_i(s_\Si^j)=\dr^\La_i\rfloor
ds_\Si^j=\dl_i^j\dl^\La_\Si$ up to permutations of multi-indices
$\La$ and $\Si$. Its coefficients $\vt^\la$, $\vt^i$, $\vt^i_\La$
are local smooth functions of finite jet order possessing the
transformation law
\be
\vt'^\la=\frac{\dr x'^\la}{\dr x^\m}\vt^\m, \qquad
\vt'^i=\frac{\dr y'^i}{\dr y^j}\vt^j + \frac{\dr y'^i}{\dr
x^\m}\vt^\m, \qquad \vt'^i_\La=\op\sum_{|\Si|\leq|\La|}\frac{\dr
y'^i_\La}{\dr y^j_\Si}\vt^j_\Si + \frac{\dr y'^i_\La}{\dr
x^\m}\vt^\m.
\ee

Extended to the DGA $\cO^*_\infty$, the interior product obeys the
rule
\be
\vt\rfloor(\f\w\si)=(\vt\rfloor \f)\w\si
+(-1)^{|\f|}\f\w(\vt\rfloor\si).
\ee
Any derivation $\vt$ (\ref{g3}) of the ring $\cO^0_\infty$ yields
a derivation (a Lie derivative $\bL_\vt$) of the DGA
$\cO^*_\infty$ given by the relations
\be
\bL_\vt\f=\vt\rfloor d\f+ d(\vt\rfloor\f), \qquad
\bL_\vt(\f\w\f')=\bL_\vt(\f)\w\f' +\f\w\bL_\vt(\f').
\ee
In particular, the total derivatives (\ref{j10}) are defined as
the local derivations of $\cO^0_\infty$ and the corresponding Lie
derivatives $d_\la\f=\bL_{d_\la}\f$ of $\cO^*_\infty$.

A derivation $\vt$ (\ref{g3}) is called contact if the Lie
derivative $\bL_\up$ preserves the contact ideal of the DGA
$\cO^*_\infty$, i.e., the Lie derivative $\bL_\up$ of a contact
form is a contact form.

\begin{prop} \label{j11} \mar{j11}
A derivation $\vt$ (\ref{g3}) is contact iff it takes the form
\mar{g4}\beq
\vt=\vt^\la\dr_\la +\vt^i\dr_i +\op\sum_{|\La|>0}
[d_\La(\vt^i-y^i_\m\vt^\m)+y^i_{\m+\La}\vt^\m]. \label{g4}
\eeq
\end{prop}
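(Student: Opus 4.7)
The plan is to translate the contact condition into a recursion for the coefficients $\vt^i_\La$ and to solve that recursion by induction on $|\La|$. Since the contact ideal of $\cO^*_\infty$ is generated by the forms $\th^i_\La$, the derivation $\vt$ is contact iff $\bL_\vt\th^i_\La$ lies in the contact ideal for every $i$ and $\La$; equivalently, the $h_0$-projection vanishes, $h_0(\bL_\vt\th^i_\La)=0$.

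First I would compute $\bL_\vt\th^i_\La$ using the commutation $\bL_\vt\circ d = d\circ\bL_\vt$. From $\th^i_\La = dy^i_\La - y^i_{\la+\La}\,dx^\la$ one obtains
$$
\bL_\vt\th^i_\La = d\vt^i_\La - \vt^i_{\la+\La}\,dx^\la - y^i_{\la+\La}\,d\vt^\la.
$$
Using the local decomposition $df = d_\m f\,dx^\m + \dr^\Si_j f\,\th^j_\Si$ valid for every $f\in\cO^0_\infty$ (the coordinate form of $d=d_H+d_V$), the horizontal component extracts as
$$
h_0(\bL_\vt\th^i_\La) = \bigl(d_\la\vt^i_\La - \vt^i_{\la+\La} - y^i_{\m+\La}\,d_\la\vt^\m\bigr)\,dx^\la.
$$
The contact condition $h_0(\bL_\vt\th^i_\La)=0$ is therefore equivalent to the recursion
$$
\vt^i_{\la+\La} = d_\la\vt^i_\La - y^i_{\m+\La}\,d_\la\vt^\m, \qquad |\La|\geq 0. \qquad (\ast)
$$

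Finally I would verify by induction on $|\La|$ that the unique solution of $(\ast)$ with prescribed $\vt^\la,\vt^i$ is precisely the expression of (\ref{g4}). The base case $|\La|=0$ reduces to the identity $d_\la(\vt^i - y^i_\m\vt^\m) = d_\la\vt^i - y^i_{\m+\la}\vt^\m - y^i_\m d_\la\vt^\m$, and rearrangement gives $\vt^i_\la = d_\la(\vt^i - y^i_\m\vt^\m) + y^i_{\m+\la}\vt^\m$. For the inductive step, substituting the inductive hypothesis $\vt^i_\La = d_\La(\vt^i - y^i_\m\vt^\m) + y^i_{\m+\La}\vt^\m$ into $(\ast)$ and expanding $d_\la(y^i_{\m+\La}\vt^\m) = y^i_{\m+\la+\La}\vt^\m + y^i_{\m+\La}d_\la\vt^\m$ by Leibniz, the term $y^i_{\m+\La}d_\la\vt^\m$ cancels against the one present in $(\ast)$, which yields (\ref{g4}) with multi-index $\la+\La$ in place of $\La$.

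The only real obstacle is keeping the multi-index bookkeeping under $d_\la$ and the Leibniz rule straight; once the horizontal/vertical split converts the contact condition into the recursion $(\ast)$, everything collapses into a transparent induction and the equivalence follows in both directions simultaneously.
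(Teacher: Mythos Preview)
Your proposal is correct and is exactly the ``direct computation'' the paper invokes: the paper's own proof merely says that (\ref{g4}) results from a direct computation analogous to the first part of B\"acklund's theorem and cites Ibragimov, whereas you actually carry that computation out---reducing the contact condition to $h_0(\bL_\vt\th^i_\La)=0$, extracting the recursion $(\ast)$, and solving it by induction. So you have supplied the details the paper omits, by the same route.
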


\begin{proof}
The expression (\ref{g4}) results from a direct computation
similar to that of the first part of B\"acklund's theorem
\cite{ibr}.
\end{proof}

A glance at the expression (\ref{j12}) enables one to regard a
contact derivation (\ref{g4}) as an infinite order jet
prolongation of its restriction
\mar{j15}\beq
\up=\vt^\la\dr_\la +\vt^i\dr_i \label{j15}
\eeq
to the ring $C^\infty(Y)$. Since coefficients $\vt^\la$ and
$\vt^i$ depend on jet coordinates $y^i_\La$, $0<|\La|$, in
general, one calls $\up$ (\ref{j15}) a generalized vector field.
Generalized symmetries of differential equations and Lagrangians
has been intensively studied \cite{and93,ibr,kras,olv}.

Any contact derivation admits the horizontal splitting
\mar{g5}\beq
\vt=\up_H +\up_V=\vt^\la d_\la + [(\vt^i\dr_i + \op\sum_{0<|\La|}
d_\La (\vt^i-y^i_\m\vt^\m)\dr_i^\La] \label{g5}
\eeq
 relative to the canonical connection
$\nabla=dx^\la\ot d_\la$ on the $C^\infty(X)$-ring $\cO^0_\infty$.
One can show \cite{cmp04} that a vertical contact derivation
\be
 \up=\up^i\dr_i +\op\sum_{0<|\La|} d_\La \up^i\dr_i^\La
\ee
obeys the relations
\mar{g6}\beq
\up\rfloor d_H\f=-d_H(\up\rfloor\f), \qquad
\bL_\up(d_H\f)=d_H(\bL_\up\f), \qquad \f\in\cO^*_\infty.
\label{g6}
\eeq
They follow from the equalities
\mar{0480}\ben
&& \up\rfloor \th^i_\La=\up^i_\La, \qquad
d_H(\up^i_\La)=\up^i_{\la+\La}dx^\la, \qquad
d_H\th^i_\la=dx^\la\w\th^i_{\la+\La}, \label{0480}\\
&& d_\la\circ v^i_\La\dr_i^\La= v^i_\La\dr_i^\La \circ d_\la.
\nonumber
\een

\section{Variational bicomplex on fiber bundles}

In order to transform the bicomplex $\cO^{*,*}_\infty$ into the
variational bicomplex, one introduces the $\Bbb R$-module
projector
\mar{r12}\beq
\vr=\op\sum_{k>0} \frac1k\ol\vr\circ h_k\circ h^n, \qquad
\ol\vr(\f)= \op\sum_{|\La|\geq 0} (-1)^{\nm\La}\th^i\w
[d_\La(\dr^\La_i\rfloor\f)], \qquad \f\in \cO^{>0,n}_\infty,
\label{r12}
\eeq
such that $\vr\circ d_H=0$ and the nilpotent variational operator
$\dl=\vr\circ d$ on $\cO^{*,n}_\infty$ which obeys the relation
\beq
\dl\circ\vr-\vr\circ d=0. \label{am13}
\eeq
Let us denote $\bE_k=\vr(\cO^{k,n}_\infty)$. Then the DGA
$\cO^*_\infty$ is split into the variational bicomplex
\mar{7}\beq
\begin{array}{ccccrlcrlccrlccrlcrl}
 & &  &  & & \vdots & & & \vdots  & & &
&\vdots  & & & &
\vdots & &   & \vdots \\
& & & & _{d_V} & \put(0,-7){\vector(0,1){14}} & & _{d_V} &
\put(0,-7){\vector(0,1){14}} & &  & _{d_V} &
\put(0,-7){\vector(0,1){14}} & & &  _{d_V} &
\put(0,-7){\vector(0,1){14}}& & _{-\dl} & \put(0,-7){\vector(0,1){14}} \\
 &  & 0 & \to & &\cO^{1,0}_\infty &\ar^{d_H} & &
\cO^{1,1}_\infty & \ar^{d_H} &\cdots  & & \cO^{1,m}_\infty
&\ar^{d_H} &\cdots & &
\cO^{1,n}_\infty &\ar^\vr &  & \bE_1\to  0\\
& & & & _{d_V} &\put(0,-7){\vector(0,1){14}} & & _{d_V} &
\put(0,-7){\vector(0,1){14}} & & &  _{d_V}
 & \put(0,-7){\vector(0,1){14}} & &  & _{d_V} & \put(0,-7){\vector(0,1){14}}
 & & _{-\dl} & \put(0,-7){\vector(0,1){14}} \\
0 & \to & \Bbb R & \to & & \cO^0_\infty &\ar^{d_H} & &
\cO^{0,1}_\infty & \ar^{d_H} &\cdots  & & \cO^{0,m}_\infty &
\ar^{d_H} & \cdots & &
\cO^{0,n}_\infty & \equiv &  & \cO^{0,n}_\infty \\
& & & & _{\pi^{\infty*}}& \put(0,-7){\vector(0,1){14}} & &
_{\pi^{\infty*}} & \put(0,-7){\vector(0,1){14}} & & &
_{\pi^{\infty*}}
 & \put(0,-7){\vector(0,1){14}} & &  & _{\pi^{\infty*}} &
\put(0,-7){\vector(0,1){14}} & &  & \\
0 & \to & \Bbb R & \to & & \cO^0X &\ar^d & & \cO^1X & \ar^d
&\cdots  & & \cO^mX & \ar^d & \cdots & &
\cO^n(X) & \ar^d & 0 &  \\
& & & & &\put(0,-5){\vector(0,1){10}} & & &
\put(0,-5){\vector(0,1){10}} & & &
 & \put(0,-5){\vector(0,1){10}} & & &   &
\put(0,-5){\vector(0,1){10}} & &  & \\
& & & & &0 & &  & 0 & & & & 0 & & & & 0 & &  &
\end{array}
\label{7}
\eeq
Its relevant cohomology has been obtained as follows
\cite{jmp,ijmms}. One starts from the algebraic Poincar\'e lemma
\cite{olv,tul}.

\begin{lem} \label{042} \mar{042} If $Y$ is a contractible
bundle $\Bbb R^{n+p}\to\Bbb R^n$, the variational bicomplex
(\ref{7}) is exact at all terms, except $\Bbb R$.
\end{lem}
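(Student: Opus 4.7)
The plan is to construct explicit contracting homotopies for each of the three differentials ($d_H$, $d_V$, and the variational operator $\delta$) that appear in the bicomplex (\ref{7}), exploiting the fact that on the trivial bundle $\mathbb{R}^{n+p}\to\mathbb{R}^n$ the jet manifold $J^\infty Y$ has global coordinates $(x^\lambda,y^i_\Lambda)$ and is smoothly contractible onto the origin. The overall strategy is the standard Tulczyjew/Anderson splitting: first settle the rows under $d_H$, then the columns under $d_V$, then the top corner where $\varrho$ enters.

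First, for the $d_V$ columns, I would observe that if we freeze $x$ and fix all horizontal components, the vertical differential $d_V$ acts as the de~Rham differential in the contact variables $\theta^i_\Lambda$ with coefficients in $\cO^0_\infty$. Because the fibre of $J^\infty(\mathbb{R}^{n+p}\to\mathbb{R}^n)\to\mathbb{R}^n$ is a linear space, a standard Euler–homotopy $K_V(\phi)=\int_0^1 t^{-1}(h^*_t\phi)\,dt$, where $h_t$ rescales every $y^i_\Lambda$ by $t$, gives $K_V d_V+d_V K_V=\mathrm{id}$ on all $\cO^{>0,m}_\infty$, and the kernel at $k=0$ is $\cO^m(\mathbb{R}^n)$. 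This reduces the bicomplex to its bottom row plus the top augmentation.

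Next, for the $d_H$ rows $\cO^{k,*}_\infty$, I would construct the horizontal homotopy operator due to Anderson: for $\phi\in\cO^{k,m}_\infty$ with $m<n$, set
\[
P_H(\phi)=\sum_{|\Lambda|\geq 0}\frac{|\Lambda|!\,m!}{(|\Lambda|+m)!}\,
y^i_\Lambda\,\partial^{i,\Lambda}_{\mathrm{int}}\!\!\left(\int_0^1 s^{-1}(h^*_s\phi)\,ds\right),
\]
where $\partial^{i,\Lambda}_{\mathrm{int}}$ denotes the appropriate interior product and integration in the base direction, arranged so that $d_H P_H+P_H d_H=\mathrm{id}$ on $\cO^{k,m}_\infty$ for $0<m<n$, and at $m=0$ the kernel is the constants (respectively the pull-backs from $\mathbb{R}^n$ at $k=0$). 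This shows $d_H$-exactness of every row except at the left edge, and combined with the preceding step gives exactness of the bottom de~Rham row $\cO^*X$ (classical Poincaré lemma) and of all $\cO^{k,*}_\infty$ with $k\geq 1$ up through $\cO^{k,n}_\infty$.

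Finally, for the augmented top row $\cdots\to\cO^{k,n}_\infty\op\longrightarrow^{\varrho}\bE_k\to 0$ and the exactness at $\bE_k$ under $-\delta$, I would use the defining property (\ref{am13}), $\delta\varrho=\varrho d$, together with the explicit form of $\overline\varrho$ in (\ref{r12}). Surjectivity of $\varrho$ onto $\bE_k$ is immediate from $\varrho^2=\varrho$. For exactness at $\cO^{k,n}_\infty$, suppose $\varrho\phi=0$; integration by parts exhibited by the sum $\sum(-1)^{|\Lambda|}d_\Lambda(\theta^i\wedge\partial^\Lambda_i\rfloor\phi)$ produces, modulo $d_H$-exact terms, a form on which the previously constructed $d_H$-homotopy applies, so $\phi=d_H\psi+d_V\chi$ with $\chi$ corrected by $P_H$. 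Exactness of $-\delta$ at $\bE_{k+1}$ then follows because any $\varrho\phi$ with $\delta(\varrho\phi)=\varrho d\phi=0$ pulls back, via (\ref{am13}) and $d$-exactness of $\phi$ on $J^\infty Y$ (which reduces to $d_H$ and $d_V$ exactness already obtained), to a $\varrho$-image of a form differing from $\phi$ by $d$-exact terms.

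The main obstacle I expect is the last step: controlling the interplay between $\varrho$, $d_H$, and $d_V$ in the top-corner part of the diagram, since $\varrho$ involves an unbounded sum $\sum_{|\Lambda|\geq 0}(-1)^{|\Lambda|}d_\Lambda(\partial^\Lambda_i\rfloor\cdot)$ whose convergence and interplay with the homotopies $K_V$, $P_H$ must be verified term-by-term on forms of bounded jet order. Everything else reduces to Poincaré-type integrations against contractions $h_t$ and $h_s$ which are globally defined on $\mathbb{R}^{n+p}\to\mathbb{R}^n$.
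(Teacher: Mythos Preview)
Your overall strategy---construct explicit contracting homotopies for each of $d_V$, $d_H$, $\delta$, and $\varrho$---is exactly the paper's approach; the paper's proof simply \emph{cites} the relevant formulas (Olver (5.72), (5.109), (5.84) for $d_V$, $d_H$, $\delta$, and Tulczyjew (4.5) for $\varrho$) rather than rederiving them. Your treatment of the $d_V$- and $d_H$-columns and rows is in line with this.

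Where you diverge is the corner involving $\varrho$ and $\delta$. Instead of invoking a direct homotopy for $\varrho$ (Tulczyjew's formula) and for $\delta$ (Olver's formula), you try to bootstrap exactness at $\cO^{k,n}_\infty$ and at $\bE_k$ from the already-established $d_H$- and $d_V$-homotopies together with the relation $\delta\varrho=\varrho d$. This can in principle be made to work, but as written it is not a proof: the sentence ``integration by parts \ldots produces, modulo $d_H$-exact terms, a form on which the previously constructed $d_H$-homotopy applies, so $\phi=d_H\psi+d_V\chi$'' does not establish that $\varrho\phi=0$ implies $\phi\in d_H(\cO^{k,n-1}_\infty)$, and the $d_V\chi$ term is out of place since $\phi$ already sits in top horizontal degree $n$. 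Likewise, your $\delta$-exactness argument leans on ``$d$-exactness of $\phi$ on $J^\infty Y$,'' which you have not yet shown in the form needed. The paper sidesteps all of this by citing the explicit homotopies for $\varrho$ and $\delta$ directly; that is precisely what resolves the ``main obstacle'' you anticipate.
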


\begin{proof}
The homotopy operators for $d_V$, $d_H$, $\dl$ and $\vr$ are given
by the formulas (5.72), (5.109), (5.84) in \cite{olv} and (4.5) in
\cite{tul}, respectively.
\end{proof}

\begin{theo} \label{g90} \mar{g90}
(i) The second row from the bottom and the last column of this
bicomplex make up the variational complex
\mar{b317}\beq
0\to\Bbb R\to \cO^0_\infty \ar^{d_H}\cO^{0,1}_\infty\cdots
\op\longrightarrow^{d_H} \cO^{0,n}_\infty  \op\longrightarrow^\dl
\bE_1 \op\longrightarrow^\dl \bE_2 \ar \cdots\,. \label{b317}
\eeq
Its cohomology is isomorphic to the de Rham cohomology of the
fiber bundle $Y$, namely,
\mar{j20}\beq
H^{k<n}(d_H;\cQ^*_\infty)=H^{k<n}(Y), \qquad H^{k-n}(\dl;
\cQ^*_\infty)=H^{k\geq n}(Y). \label{j20}
\eeq
(ii) The rows of contact forms of the
bicomplex (\ref{7}) are exact sequences.
\end{theo}

\begin{proof}
Let $\gQ^*_\infty$ be the sheaf of germs of differential forms on
$J^\infty Y$. It is split into the variational bicomplex
$\gQ^{*,*}_\infty$.  Let $\cQ^*_\infty$ be the DGA of global
sections of $\gQ^*_\infty$. It is also decomposed into the
variational bicomplex $\cQ^{*,*}_\infty$. Since the paracompact
space $J^\infty Y$ admits a partition of unity by elements of the
ring $\cQ^0_\infty$, the $d_H$- and $\dl$-cohomology of
$\cQ^{*,*}_\infty)$ can be obtained as follows
\cite{ander,jmp,ijmms,tak2}. Let us consider the variational
subcomplex of $\gQ^{*,*}_\infty$ and the subcomplexes of sheaves
of contact forms
\mar{g91,'}\ben
&& 0\to\Bbb R\to \gQ^0_\infty \ar^{d_H}\gQ^{0,1}_\infty\cdots
\op\longrightarrow^{d_H} \gQ^{0,n}_\infty  \op\longrightarrow^\dl
\gE_1 \op\longrightarrow^\dl \gE_2 \longrightarrow \cdots, \qquad
\gE_k=\vr(\gQ^{k,n}_\infty),
\label{g91} \\
&& 0\to \gQ^{k,0}_\infty\ar^{d_H} \gQ^{k,1}_\infty\cdots \ar^{d_H}
\gQ^{k,n}_\infty  \ar^\vr \gE_k\to 0. \label{g91'}
\een
By virtue of Lemma \ref{042}, these complexes are exact at all
terms, except $\Bbb R$. Since $\gQ^{m,k}_\infty$ are sheaves of
$\cQ^0_\infty$-modules, they are fine. The sheaves $\gE_k$ are
also proved to be fine (see Appendix C). Consequently, all
sheaves, except $\Bbb R$, in the complexes (\ref{g91}) --
(\ref{g91'}) are acyclic. Therefore, these complexes are
resolutions of the constant sheaf $\Bbb R$ and the zero sheaf over
$J^\infty Y$, respectively. Let us consider the corresponding
subcomplexes
\mar{g92,'}\ben
&& 0\to\Bbb R\to \cQ^0_\infty \ar^{d_H}\cQ^{0,1}_\infty\cdots
\op\longrightarrow^{d_H} \cQ^{0,n}_\infty \op\longrightarrow^\dl
\G(\gE_1) \op\longrightarrow^\dl \G(\gE_2) \longrightarrow \cdots,
\label{g92} \\
&& 0\to \cQ^{k,0}_\infty\ar^{d_H} \cQ^{k,1}_\infty\cdots \ar^{d_H}
\cQ^{k,n}_\infty  \ar^\vr \G(\gE_k)\to 0 \label{g92'}
\een
of the DGA $\cQ^*_\infty$. In accordance with the abstract de Rham
theorem (see Appendix B), cohomology of the complex (\ref{g92})
equals the cohomology  of $J^\infty Y$ with coefficients in the
constant sheaf $\Bbb R$, while the complex (\ref{g92'}) is exact.
Since $Y$ is a strong deformation retract of $J^\infty Y$,
cohomology of the complex (\ref{g92}) equals the de Rham
cohomology of $Y$ by virtue of the isomorphisms (\ref{j19}). Note
that, in order to prove the exactness of the complex (\ref{g92'}),
the acyclicity of the sheaves $\gE_k$ need not be justified.
Finally, the subalgebra $\cO^*_\infty\subset \cQ^*_\infty$ is
proved to have the same $d_H$- and $\dl$-cohomology as
$\cQ^*_\infty$ \cite{lmp,ijmms} (see Appendix D). Similarly, one
can show that, restricted to $\cO^{k,n}_\infty$, the operator
$\vr$ remains exact.
\end{proof}

Note that the cohomology isomorphism (\ref{j20}) gives something
more. The relation (\ref{am13}) for $\vr$ and the relation
$h_0d=d_Hh_0$ for $h_0$ define  a cochain morphism of the de Rham
complex (\ref{j4}) of the DGA $\cO^*_\infty$ to its variational
complex (\ref{b317}). The corresponding homomorphism of their
cohomology groups is an isomorphism by virtue of Theorem \ref{j4}
and item (i) of Theorem \ref{g90}. Then the splitting of a closed
form $\f\in\cO^*_\infty$ in Corollary \ref{j21} leads to the
following decompositions.

\begin{prop} \label{t41}
Any $d_H$-closed form $\f\in\cO^{0,m}$, $m< n$, is represented by
a sum
\mar{t60}\beq
\f=h_0\si+ d_H \xi, \qquad \xi\in
\cO^{m-1}_\infty, \label{t60}
\eeq
where $\si$ is a closed $m$-form on $Y$. Any $\dl$-closed form
$\f\in\cO^{k,n}$ is split into
\mar{t42a-c}\ben
&& \f=h_0\si +
d_H\xi, \qquad k=0, \qquad \xi\in \cO^{0,n-1}_\infty,
\label{t42a}\\
&& \f=\vr(\si) +\dl(\xi), \qquad k=1, \qquad \xi\in
\cO^{0,n}_\infty,
\label{t42b}\\
&& \f=\vr(\si) +\dl(\xi), \qquad k>1, \qquad \xi\in \bE_{k-1},
\label{t42c}
\een
where $\si$ is a closed $(n+k)$-form on $Y$.
\end{prop}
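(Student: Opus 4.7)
The plan, indicated before the proposition, is to transport the de Rham decomposition of Corollary~\ref{j21} to the variational complex (\ref{b317}) along a chain quasi-isomorphism built from $h_0$ and $\vr$. Define
\[
\Theta^p:\cO^p_\infty\to V^p,\qquad \Theta^p=h_0\ \text{for}\ 0\leq p\leq n,\qquad \Theta^{n+k}=\vr\circ h^n\ \text{for}\ k\geq 1,
\]
where $V^p$ denotes the $p$-th term of the variational complex. Cochain compatibility in degrees $p<n$ is exactly $h_0\circ d=d_H\circ h_0$. At $p=n$, since $d_H$ vanishes on $\cO^{0,n}_\infty$ one has $d(h_0\f)=d_V(h_0\f)$; decomposing $\f\in\cO^n_\infty$ by bidegree and using $\vr\circ d_H=0$, both $\vr(h^n d\f)$ and $\dl(h_0\f)=\vr(d\,h_0\f)$ reduce to $\vr(d_V h_0\f)$. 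In degrees $p>n$ the identity (\ref{am13}), $\dl\circ\vr=\vr\circ d$, together with $\vr\circ d_H=0$, closes the check.

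Next I will argue that $\Theta$ is a quasi-isomorphism. By Theorem~\ref{j4} and Theorem~\ref{g90}(i), both source and target cohomologies of $\Theta$ are naturally isomorphic to $H^*_{DR}(Y)$. Both identifications factor through the pullback $\pi^{\infty*}_0:\cO^*(Y)\hookrightarrow\cO^*_\infty$: a closed form $\si$ on $Y$ represents its own class in $H^*(\cO^*_\infty,d)$, and its image $\Theta(\si)=h_0\si$ (respectively $\vr(h^n\si)$) represents the corresponding class in the variational cohomology under the natural comparison map induced by $h_0$. Hence $H^*(\Theta)$ is the identity on $H^*_{DR}(Y)$.

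Given $\f$ as in the proposition, surjectivity of $H^*(\Theta)$ supplies a $d$-closed $\psi\in\cO^*_\infty$ of the same total degree with $[\Theta(\psi)]=[\f]$. Corollary~\ref{j21} applied to $\psi$ yields $\psi=\si+d\xi$ with $\si$ closed on $Y$. Applying $\Theta$,
\[
\Theta(\psi)=\Theta(\si)+D\,\Theta(\xi),
\]
where $D=d_H$ in total degrees below $n$ and $D=\dl$ otherwise. Absorbing the coboundary witnessing $[\Theta(\psi)]=[\f]$ into the second summand gives $\f=\Theta(\si)+D(\xi')$ for a suitable $\xi'$ lying in the variational-complex term preceding $\f$. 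Identifying $\Theta(\si)=h_0\si$ for $\deg\si\leq n$ and $\Theta(\si)=\vr(h^n\si)$, abbreviated $\vr(\si)$, for $\deg\si>n$ recovers the four decompositions (\ref{t60}), (\ref{t42a}), (\ref{t42b}), (\ref{t42c}).

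The main obstacle is the middle step. The proofs of Theorems~\ref{j4} and \ref{g90}(i) both pass through sheaf cohomology and the abstract de~Rham theorem, so verifying that $\Theta$ realizes the identity on $H^*_{DR}(Y)$ — rather than some nontrivial automorphism — requires either tracking representatives through the sheaf resolutions or a direct coordinate-level computation on pulled-back closed forms.
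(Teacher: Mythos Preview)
Your approach is essentially the paper's own. The paragraph immediately preceding the proposition sketches precisely your cochain morphism $\Theta$ (via $h_0 d=d_H h_0$ and the relation (\ref{am13}), $\dl\circ\vr=\vr\circ d$), asserts that it induces a cohomology isomorphism ``by virtue of Theorem \ref{j4} and item (i) of Theorem \ref{g90},'' and then invokes Corollary \ref{j21}; the paper offers no further justification for the isomorphism step than you do.

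The concern you flag is legitimate but is easy to settle. The projectors $h_0$ and $\vr$ are local operators, so $\Theta$ lifts to a morphism of complexes of sheaves on $J^\infty Y$ from the de Rham resolution (\ref{j7}) of the constant sheaf $\Bbb R$ to the variational resolution (\ref{g91}), restricting to the identity on $\Bbb R$. Both resolutions consist of fine, hence acyclic, sheaves, so the abstract de Rham theorem (Theorem \ref{+132}) identifies the cohomology of each with $H^*(J^\infty Y;\Bbb R)$; a morphism of acyclic resolutions that is the identity on the augmentation induces the identity on that sheaf cohomology. This settles the isomorphism for the $\cQ^*_\infty$ versions of both complexes; the passage back to $\cO^*_\infty$ is exactly the content of Appendix D, already used in the proof of Theorem \ref{g90}. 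Hence no nontrivial automorphism can intervene, and your argument (and the paper's) goes through.
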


One can think of the elements
\be
L=\cL\om\in \cO^{0,n}_\infty, \qquad \dl L=\op\sum_{|\La|\geq
0}(-1)^{|\La|}d_\La(\dr^\La_i \cL)\th^i\w\om\in \bE_1, \qquad
 \om=dx^1\w\cdots\w dx^n,
\ee
of the variational complex (\ref{b317}) as being a finite order
Lagrangian and its Euler--Lagrange operator, respectively. Then
the following are corollaries of Theorem \ref{g90}.

\begin{cor} \label{lmp112'}
(i) A finite order Lagrangian $L\in \cO^{0,n}_\infty$ is
variationally trivial, i.e.,  $\dl(L)=0$ iff
\beq
L=h_0\si + d_H \xi, \qquad \xi\in \cO^{0,n-1}_\infty,
\label{tams3}
\eeq
where $\si$ is a closed $n$-form on $Y$. (ii) A finite order
Euler--Lagrange-type operator $\cE\in \bE_1$ satisfies the
Helmholtz condition $\dl(\cE)=0$ iff
\be
\cE=\dl(L) + \vr(\si), \qquad L\in\cO^{0,n}_\infty,
\ee
where $\si$ is a closed $(n+1)$-form on $Y$.
\end{cor}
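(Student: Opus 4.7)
The plan is to read both parts off Proposition \ref{t41} in the cases $k=0$ and $k=1$ respectively, and then to verify the converse implications by direct calculation using only the identities already at our disposal: $\vr\circ d_H=0$, the intertwining relation $\dl\circ\vr=\vr\circ d$ from (\ref{am13}), and the nilpotency $\dl\circ\dl=0$.

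For part (i), the forward implication ($\dl L=0\Rightarrow L=h_0\si+d_H\xi$) is exactly the decomposition (\ref{t42a}) of Proposition \ref{t41} with $k=0$, so nothing new is needed. For the converse, I would compute $\dl L$ termwise on a form of this shape. The term $\dl(d_H\xi)=\vr\,d\,d_H\xi=-\vr\,d_H\,d\xi$ vanishes by $\vr\circ d_H=0$. For the term $\dl(h_0\si)$, letting $\tilde\si=\pi^{\infty*}\si$ and decomposing by bi-degree as $\tilde\si=\sum_{k=0}^{n}\tilde\si_{k,n-k}$ with $\tilde\si_{k,n-k}\in\cO^{k,n-k}_\infty$, the condition $d\tilde\si=0$ read off in bi-degree $(1,n)$ yields $d_V\tilde\si_{0,n}=-d_H\tilde\si_{1,n-1}$. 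Since $h_0\tilde\si=\tilde\si_{0,n}$ and $d_H\tilde\si_{0,n}\in\cO^{0,n+1}=0$, we get $d(h_0\tilde\si)=d_V\tilde\si_{0,n}=-d_H\tilde\si_{1,n-1}$, and one final application of $\vr\circ d_H=0$ kills this.

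For part (ii), the forward implication is (\ref{t42b}) of Proposition \ref{t41} with $k=1$, after simply renaming the $\xi\in\cO^{0,n}_\infty$ there as the Lagrangian $L$. For the converse, $\dl(\dl L)=0$ by nilpotency, while $\dl(\vr(\si))=\vr(d\si)=0$ follows from (\ref{am13}) together with $d\si=0$ (applied after pull-back to $\cO^{n+1}_\infty$).

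The substantive content is already supplied by Proposition \ref{t41} (which in turn rests on Theorem \ref{g90}); the only manipulation in the converse directions that requires any thought is the handling of $\dl(h_0\si)$ in (i), which is a short bi-graded calculation using $d\pi^{\infty*}\si=0$ in bi-degree $(1,n)$. I do not anticipate a genuine obstacle.
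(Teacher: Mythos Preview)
Your proposal is correct and follows essentially the paper's approach: the paper gives no explicit proof at all, merely stating that the corollary follows from Theorem \ref{g90} (through Proposition \ref{t41}), and you have correctly identified Proposition \ref{t41} as supplying the forward implications. The converse verifications you spell out---$\dl(d_H\xi)=0$ via $\vr\circ d_H=0$, $\dl(h_0\si)=0$ via the $(1,n)$-component of $d\tilde\si=0$, and $\dl(\vr\si)=\vr(d\si)=0$ via (\ref{am13})---are exactly the routine checks the paper leaves to the reader, and your bi-graded computation for $\dl(h_0\si)$ is the natural way to do it.
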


\begin{cor} \label{g93} \mar{g93}
The exactness of the row of one-contact forms of the variational
bicomplex (\ref{7}) at the term $\cO^{1,n}_\infty$ relative to the
projector $\vr$ provides the $\Bbb R$-module decomposition
\be
\cO^{1,n}_\infty=\bE_1\oplus d_H(\cO^{1,n-1}_\infty.
\ee
Given a Lagrangian $L\in \cO^{0,n}_\infty$, we have the
corresponding decomposition
\mar{+421}\beq
dL=\dl L-d_H\Xi. \label{+421}
\eeq
\end{cor}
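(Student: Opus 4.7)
The two assertions are a single structural observation together with its immediate specialisation. My plan is to derive (a) directly from the exactness of the one-contact row of the bicomplex (\ref{7}), which is item (ii) of Theorem \ref{g90}, and then to obtain (b) by applying (a) to the one-form $dL$.

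For part (a), the exact row in question ends with
$$\cO^{1,n-1}_\infty \op\longrightarrow^{d_H}\cO^{1,n}_\infty\op\longrightarrow^{\vr}\bE_1\longrightarrow 0,$$
so $\vr$ surjects onto $\bE_1=\vr(\cO^{1,n}_\infty)$ with kernel $d_H(\cO^{1,n-1}_\infty)$. Since $\vr$ is an $\Bbb R$-module projector and $\vr\circ d_H=0$ by construction, for every $\f\in\cO^{1,n}_\infty$ I will write $\f = \vr(\f) + (\f-\vr(\f))$; the first summand lies in $\bE_1$, while the second is annihilated by $\vr$ and therefore lies in $\ker\vr = d_H(\cO^{1,n-1}_\infty)$. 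Directness of the sum follows from the same two identities: if $e+d_H\xi=0$ with $e\in\bE_1$ and $\xi\in\cO^{1,n-1}_\infty$, then applying $\vr$ and using $\vr|_{\bE_1}=\id$ together with $\vr\circ d_H=0$ forces $e=0$, hence $d_H\xi = 0$ as well.

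For part (b), given $L\in\cO^{0,n}_\infty$ I will decompose $dL = d_H L + d_V L$. Because $d_H$ raises horizontal degree by one and the base $X$ has dimension $n$, one has $d_H L\in \cO^{0,n+1}_\infty = 0$; consequently $dL = d_V L$ already lies in $\cO^{1,n}_\infty$, precisely where part (a) applies. The decomposition from (a) then gives $dL = \vr(dL) + d_H\eta$ for some $\eta\in\cO^{1,n-1}_\infty$, and since $\dl=\vr\circ d$ by definition we have $\vr(dL)=\dl L$, so setting $\Xi = -\eta$ yields the required identity $dL = \dl L - d_H\Xi$.

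I do not anticipate any real obstacle: everything reduces to the two structural identities $\vr\circ d_H=0$ and $\vr^2=\vr$, combined with Theorem \ref{g90}(ii). The one mild point requiring care is recognising that the maximal horizontal degree on $X$ forces $d_H L = 0$, so that $dL$ is a one-contact $n$-horizontal form and (a) actually applies.
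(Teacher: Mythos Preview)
Your argument is correct and is exactly the intended one: the paper states Corollary \ref{g93} as an immediate consequence of Theorem \ref{g90}(ii) together with the projector identities $\vr^2=\vr$ and $\vr\circ d_H=0$, and your write-up simply makes these steps explicit. The only detail worth naming aloud (which you do) is that $d_HL=0$ by horizontal degree, so that $dL=d_VL\in\cO^{1,n}_\infty$ and part (a) applies directly.
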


The form $\Xi$ in the decomposition (\ref{+421}) is not uniquely
defined. It reads
\be
\Xi=\op\sum_{s=0}F^{\la\nu_s\ldots\nu_1}_i
\th^i_{\nu_s\ldots\nu_1}\w\om_\la, \quad F_i^{\nu_k\ldots\nu_1}=
\dr_i^{\nu_k\ldots\nu_1}\cL-d_\la F_i^{\la\nu_k\ldots\nu_1}
+h_i^{\nu_k\ldots\nu_1}, \quad \om_\la=\dr_\la\rfloor\om,
\ee
where local functions $h\in\cO^0_\infty$ obey the relations
$h^\nu_i=0$, $h_i^{(\nu_k\nu_{k-1})\ldots\nu_1}=0$. It follows
that $\Xi_L=\Xi +L$ is a Lepagean equivalent of a finite order
Lagrangian $L$ \cite{got}.

The decomposition (\ref{+421}) leads to the global first
variational formula and the first Noether theorem as follows.

\begin{theo}  \label{g75} \mar{g75}
Given a  Lagrangian $L=\cL\om\in\cO^{0,n}_\infty$, its Lie
derivative $\bL_\up L$ along a contact derivation $\up$ (\ref{g5})
fulfils the first variational formula
\mar{g8}\beq
\bL_\vt L= \up_V\rfloor\dl L +d_H(h_0(\vt\rfloor\Xi_L)) +\cL d_V
(\up_H\rfloor\om), \label{g8}
\eeq
where $\Xi_L$ is a Lepagean equivalent.
\end{theo}

\begin{proof}
The formula (\ref{g8}) comes from the splitting (\ref{+421}) and
the relations (\ref{g6}) as follows:
\be
&& \bL_\vt L=\vt\rfloor dL + d(\vt\rfloor L) =[\up_V\rfloor dL
-d_V\cL\w \up_H\rfloor\om] +[d_H(\up_H\rfloor L) + d_V(\cL
\up_H\rfloor\om)]= \\
&& \quad \up_V\rfloor dL + d_H(\up_H\rfloor L) +\cL d_V
(\up_H\rfloor\om)=   \up_V\rfloor\dl L -\up_V\rfloor d_H\Xi +
d_H(\up_H\rfloor L) +\cL d_V (\up_H\rfloor\om)
=  \\
&& \quad \up_V\rfloor\dl L +d_H(\up_V\rfloor\Xi + \up_H\rfloor L)
+\cL d_V (\up_H\rfloor\om),
\ee
where $\up_V\rfloor\Xi=h_0(\vt\rfloor\Xi)$ since $\Xi$ is a
one-contact form, $\up_H\rfloor L=h_0(\up\rfloor L)$, and
$\Xi_L=\Xi+L$.
\end{proof}

A contact derivation $\vt$ (\ref{g4}) is called a variational
symmetry of a Lagrangian $L$ if the Lie derivative $\bL_\vt
L=d_H\xi$ is $d_H$-exact. A glance at the expression (\ref{g8})
shows that: (i) a contact derivation $\vt$ is a variational
symmetry only if it is projected onto $X$ (i.e., its components
$\vt^\la$ depend only on coordinates on $X$), (ii) $\vt$ is a
variational symmetry iff its vertical part $\up_V$ is well, (iii)
it is a variational symmetry iff the density $\up_V\rfloor \dl L$
is $d_H$-exact.

\begin{theo} \label{j22} \mar{j22} If a contact derivation $\vt$
(\ref{g4}) is a variational symmetry of a Lagrangian $L$, the
first variational formula (\ref{g8}) restricted to Ker$\,\dl L$
leads to the weak conservation law
\be
0\ap d_H(h_0(\vt\rfloor\Xi_L)-\xi).
\ee
\end{theo}

\begin{rem} Let a contact derivation $\vt$ (\ref{g4}) be the jet prolongation
of a vector field $\vt^\la \dr_\la +\vt^i\dr_i$ on $Y$. If $\vt$
is a variational symmetry of a Lagrangian $L$, then it is also a
symmetry of the Euler--Lagrange operator $\dl L$ of $L$, i.e.,
$\bL_\vt\dl L=0$ by virtue of the equality $\bL_\vt\dl
L=\dl(\bL_\vt L)$. However, this equality fails to be true in the
case of generalized symmetries \cite{olv}.
\end{rem}

\section{Polynomial variational bicomplex}

Let $Y\to X$ be an affine bundle. Since $X$ is a strong
deformation retract of $Y$, the de Rham cohomology of $Y$ and,
consequently, $J^\infty Y$ equals that of $X$. An immediate
consequence of this fact is the following cohomology isomorphisms
\be
H^{<n}(d_H;\cO^*_\infty)= H^{<n}(X), \qquad
H^0(\dl;\cO^*_\infty)=H^n(X), \qquad H^k(\dl;\cO^*_\infty)=0.
\ee
It follows that every $d_H$-closed form $\f\in \cO^{0,m<n}_\infty$
is represented by the sum
\mar{t2}\beq
\f=\si + d_H\xi, \qquad \xi\in \cO^{0,m-1}_\infty, \label{t2}
\eeq
where $\si$ is a closed form on $X$. Similarly, any variationally
trivial Lagrangian takes the form
\be
L=\si + d_H\xi, \qquad \xi\in \cO^{0,n-1}_\infty,
\ee
where $\si$ is a closed $n$-form on $X$.

Let us restrict our consideration to the short variational complex
\mar{j25}\beq
0\to\Bbb R\to \cO^0_\infty \ar^{d_H}\cO^{0,1}_\infty\cdots
\op\longrightarrow^{d_H} \cO^{0,n}_\infty  \op\longrightarrow^\dl
\bE_1  \label{j25}
\eeq
and the similar complex of sheaves
\mar{j26}\beq
0\to\Bbb R\to \gQ^0_\infty \ar^{d_H}\gQ^{0,1}_\infty\cdots
\op\longrightarrow^{d_H} \gQ^{0,n}_\infty  \op\longrightarrow^\dl
\gE_1. \label{j26}
\eeq
In the case of an affine bundle $Y\to X$, we can lower this
complex  onto the base $X$ as follows.

Let us consider the open surjection $\pi^\infty:J^\infty Y\to X$
and the direct image $\gX^*_\infty=\pi^\infty_*\gQ^*_\infty$ on
$X$ of the sheaf $\gQ^*_\infty$. Its stalk over a point $x\in X$
consists of the equivalence classes of sections of the sheaf
$\gQ^*_\infty$ which coincide on the inverse images
$(\pi^\infty)^{-1}(U_x)$ of neighbourhoods $U_x$ of $x$. Since
$\pi^\infty_*\Bbb R=\Bbb R$, we have the following complex of
sheaves on $X$:
\mar{t71}\beq
0\to\Bbb R\to \gX^0_\infty
\ar^{d_H}\gX^{0,1}_\infty\ar^{d_H}\cdots \op\longrightarrow^{d_H}
\gX^{0,n}_\infty \op\longrightarrow^\dl \pi^\infty_*\gE_1.
\label{t71}
\eeq
Every point $x\in X$ has a base of open contractible
neighbourhoods $\{U_x\}$ such that the sheaves $\gQ^{0,*}_\infty$
of $\cQ^*_\infty$-modules are acyclic on the inverse images
$(\pi^\infty)^{-1}(U_x)$ of these neighbourhoods. Then, in
accordance with the Leray theorem \cite{god}, cohomology of
$J^\infty Y$ with coefficients in the sheaves $\gQ^{0,*}_\infty$
are isomorphic to that of $X$ with coefficients in their direct
images $\gX^{0,*}_\infty$, i.e., the sheaves $\gX^{0,*}_\infty$ on
$X$ are acyclic. Furthermore, Lemma \ref{042} also shows that the
complexes of sections of sheaves $\gQ^{0,*}_\infty$ over
$(\pi^\infty_0)^{-1}(U_x)$ are exact. It follows that the complex
(\ref{t71}) on $X$ is exact at all terms, except $\Bbb R$, and it
is a resolution of the constant sheaf $\Bbb R$ on $X$. Due to the
$\Bbb R$-algebra isomorphism $\cQ^*_\infty=\G(\gX^*_\infty)$, one
can think of the short variational subcomplex of the complex
(\ref{g91}) as being the complex of the structure algebras of the
sheaves in the complex (\ref{t71}) on $X$.

Given the sheaf $\gX^*_\infty$ on $X$, let us consider its
subsheaf $\gP^*_\infty$ of germs of exterior forms which are
polynomials in the fiber coordinates $y^i_\La$, $|\La|\geq 0$, of
the topological fiber bundle $J^\infty Y\to X$. This property is
coordinate-independent due to the transition functions (\ref{j3}).
The sheaf $\gP^*_\infty$ is a sheaf of $C^\infty(X)$-modules. The
DGA $P^*_\infty$ of its global sections is a
$C^\infty(X)$-subalgebra of $\cQ^*_\infty$. We have the subcomplex
\mar{t44}\beq
0\to \Bbb R \ar\gP^0_\infty\ar^{d_H}
\gP^{0,1}_\infty\ar^{d_H}\cdots \ar^{d_H}\gP^{0,n}_\infty\ar^\dl
\pi^\infty_*\gE_1 \label{t44}
\eeq
of the complex (\ref{t71}) on $X$. As a particular variant of the
algebraic Poincar\'e lemma, the exactness of the complex
(\ref{t44}) at all terms, except $\Bbb R$, follows from the form
of the homotopy operator for $d_H$ or can be proved in a
straightforward way \cite{barn}. Since the sheaves
$\gP^{0,*}_\infty$ of $C^\infty(X)$-modules on $X$ are acyclic,
the complex (\ref{t44}) is a resolution of the constant sheaf
$\Bbb R$ on $X$. Hence, cohomology of the complex
\mar{t45}\beq
0\to \Bbb R \ar P^0_\infty\ar^{d_H} P^{0,1}_\infty\ar^{d_H}\cdots
\ar^{d_H} P^{0,n}_\infty\ar^\dl\G(\gE_1) \label{t45}
\eeq
of the DGAs $P^{0,<n}_\infty$ equals the de Rham cohomology of
$X$. It follows that every $d_H$-closed polynomial form $\f\in
P^{0,m<n}_\infty$ is decomposed into the sum
\mar{t72}\beq
\f=\si + d_H\xi, \qquad \xi\in \cP^{0,m-1}_\infty, \label{t72}
\eeq
where $\si$ is a closed form on $X$.

Let $\cP^*_\infty$ be $C^\infty(X)$-subalgebra of the polynomial
algebra $P^*_\infty$ which consists of exterior forms  which are
polynomials in the fiber coordinates $y^i_\La$. Obviously,
$\cP^*_\infty$ is a subalgebra of $\cO^*_\infty$.  Finally, one
can show that $\cP^*_\infty$ have the same cohomology as
$P^*_\infty$, i.e., if $\f$ in the decomposition (\ref{t72}) is an
element of $\cP^{0,*}_\infty$ then $\xi$ is so. The proof of this
fact follows the proof in Appendix D, but differential forms on
$X$ (not $J^\infty Y$) are considered.

\section{Differential calculus on graded manifolds}

We restrict our consideration to graded manifolds $(Z,\gA)$ with
structure sheaves $\gA$ of Grassmann algebras of finite rank
\cite{bart,book05}. By a Grassmann algebra over a ring $\cK$ is
meant a $\Bbb Z_2$-graded exterior algebra of some $\cK$-module.
The symbol $[.]$ stands for the Grassmann parity.

Treating Lagrangian systems of odd variables on a smooth manifold,
we are based on the following variant of the Serre--Swan theorem
\cite{jmp05a}.

\begin{theo} \label{v0} \mar{v0}
Let $Z$ be a smooth manifold. A graded commutative
$C^\infty(Z)$-algebra $\cA$ is isomorphic to the algebra of graded
functions on a graded manifold with a body $Z$ iff it is the
exterior algebra of some projective $C^\infty(Z)$-module of finite
rank.
\end{theo}

\begin{proof} The proof follows at once from the Batchelor theorem
\cite{bart} and the classical Serre--Swan theorem generalized to
an arbitrary smooth manifold \cite{book05,ren}. By virtue of the
first one, any graded manifold $(Z,\gA)$ with a body $Z$ is
isomorphic to the one $(Z,\gA_Q)$, modelled over some vector
bundle $Q\to Z$, whose structure sheaf $\gA_Q$ is the sheaf of
germs of sections of the exterior bundle
\mar{g80}\beq
\w Q^*=\Bbb R\op\oplus_Z Q^*\op\oplus_Z\op\w^2
Q^*\op\oplus_Z\cdots, \label{g80}
\eeq
where $Q^*$ is the dual of $Q\to Z$. The structure ring $\cA_Q$ of
graded functions (sections of $\gA_Q$) on a graded manifold
$(Z,\gA_Q)$ consists of sections of the exterior bundle
(\ref{g80}). The classical Serre--Swan theorem states that a
$C^\infty(Z)$-module is isomorphic to the module of sections of a
smooth vector bundle over $Z$ iff it is a projective module of
finite rank.
\end{proof}

Assuming that Batchelor's isomorphism is fixed from the beginning,
we associate to $(Z,\gA_Q)$ the following DBGA $\cS^*[Q;Z]$
\cite{bart,book05}.  Let us consider the sheaf $\gd\gA_Q$ of
graded derivations of $\gA_Q$. One can show that its sections over
an open subset $U\subset Z$ exhaust all graded derivations of the
graded commutative $\Bbb R$-ring $\cA_U$ of graded functions on
$U$ \cite{bart}. Global sections of $\gd\gA_Q$ make up the real
Lie superalgebra $\gd\cA_Q$ of graded derivations of the $\Bbb
R$-ring $\cA_Q$, i.e.,
\be
u(ff')=u(f)f' +(-1)^{[u][f]}fu(f'), \qquad f,f'\in\cA_Q, \qquad
u\in\cA_Q.
\ee
Then one can construct the Chevalley--Eilenberg complex of
$\gd\cA_Q$ with coefficients in $\cA_Q$ \cite{fuks}. Its
subcomplex $\cS^*[Q;Z]$ of $\cA_Q$-linear morphism is the
Grassmann-graded Chevalley--Eilenberg differential calculus
\mar{v1}\beq
0\to \Bbb R\to \cA_Q \ar^d \cS^1[Q;Z]\ar^d\cdots
\cS^k[Q;Z]\ar^d\cdots \label{v1}
\eeq
over a graded commutative $\Bbb R$-ring $\cA_Q$ \cite{book05}. The
Chevalley--Eilenberg coboundary operator $d$ and the graded
exterior product $\w$ make $\cS^*[Q;Z]$ into a DBGA whose elements
obey the relations
\mar{v21}\beq
\f\w\f' =(-1)^{|\f||\f'| +[\f][\f']}\f'\w \f, \qquad d(\f\w\f')=
d\f\w\f' +(-1)^{|\f|}\f\w d\f'. \label{v21}
\eeq
Given the DGA $\cO^*Z$ of exterior forms on $Z$, there are the
canonical monomorphism $\cO^*Z\to \cS^*[Q;Z]$ and the body
epimorphism $\cS^*[Q;Z]\to \cO^*Z$ which are cochain morphisms.

\begin{lem} \label{v62} \mar{v62}
The DBGA $\cS^*[Q;Z]$ is a minimal differential calculus over
$\cA_Q$, i.e., it is generated by elements $df$, $f\in \cA_Q$.
\end{lem}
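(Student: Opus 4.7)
The plan is to parallel the argument of Proposition \ref{g62}, with modifications suited to the Grassmann-graded setting. Since the DBGA $\cS^*[Q;Z]$ is generated as a graded commutative $\cA_Q$-algebra by $\cS^1[Q;Z]$ under the exterior product $\wedge$, it suffices to establish that every one-form $\phi \in \cS^1[Q;Z]$ is a finite $\cA_Q$-linear combination of differentials $df$, $f \in \cA_Q$; the higher-degree statement then follows from the Leibniz rule (\ref{v21}).

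First I would describe the local model using Batchelor's theorem. On a trivializing open subset $U \subset Z$ for $Q \to Z$, the structure algebra $\cA_U$ is generated freely over $C^\infty(U)$ by the odd generators $c^a$ coming from a local frame of $Q^*$. A direct computation then shows that $\gd\cA_Q|_U$ is a free $\cA_U$-module of finite rank with basis $\{\dr_\la, \dr_a\}$, where $\dr_\la$ are even derivations along base coordinates $z^\la$ and $\dr_a$ are odd derivations dual to $c^a$. By the Chevalley--Eilenberg construction, $\cS^1[Q;Z]|_U$ is the $\cA_U$-linear dual, hence a free $\cA_U$-module with dual basis $\{dz^\la, dc^a\}$. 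Each of these is manifestly $df$ for $f = z^\la$ or $f = c^a$, so the claim holds locally.

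To globalize, I would invoke the existence of a finite bundle atlas of $Q \to Z$ noted in Section 2, choose a subordinate partition of unity $\{\chi_i\} \subset C^\infty(Z) \subset \cA_Q$, and use bump functions to extend each local generator $z^\la_i, c^a_i$ to global elements $f^\la_i, g^a_i \in \cA_Q$ agreeing with the local generators on $\mathrm{supp}(\chi_i)$. Writing $\phi = \sum_i \chi_i \phi$ and expanding each $\chi_i \phi$ in the local basis $\{dz^\la_i, dc^a_i\}$ will then produce a finite $\cA_Q$-linear combination of differentials as required.

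The main obstacle is the final patching step, since $d$ is an $\cA_Q$-derivation rather than a $C^\infty(Z)$-linear map: identities of the form $\chi_i\, d(f^\la_i) = d(\chi_i f^\la_i) - f^\la_i\, d(\chi_i)$ (and their odd analogs) must be applied, producing correction terms involving $d(\chi_i)$. These are themselves of the required shape $g\,dh$, so the conclusion is preserved. Equivalently, one may observe that $\gd\cA_Q$ is a projective $\cA_Q$-module of finite rank --- identifying it with global sections of a graded vector bundle over $Z$ by Batchelor's theorem --- and conclude finiteness directly from projectivity, exactly parallel to the Serre--Swan step in the proof of Proposition \ref{g62}.
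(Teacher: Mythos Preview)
Your proposal is correct and lands on the same idea as the paper, though you develop it more explicitly. The paper's proof is terse: it simply asserts (with references) that $\gd\cA_Q$ is realized as the module of sections of a vector bundle over $Z$, hence is a projective $C^\infty(Z)$- and $\cA_Q$-module of finite rank; so is its $\cA_Q$-dual $\cS^1[Q;Z]$; reflexivity then gives $\gd\cA_Q=(\cS^1[Q;Z])^*$, from which generation of $\cS^1[Q;Z]$ by differentials follows. Your local computation that $\{\dr_\la,\dr_a\}$ is a free basis with dual $\{dz^\la,dc^a\}$ is exactly what underlies the cited ``sections of a vector bundle'' claim, and your partition-of-unity patching is the hands-on content of the projectivity statement. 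In your last paragraph you in fact name this abstract alternative yourself, so the two arguments converge; the only practical difference is that the paper outsources the local analysis and the finiteness to references, while you carry them out directly.
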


\begin{proof}
One can show that elements of $\gd\cA_Q$ are represented by
sections of some vector bundle over $Z$, i.e., $\gd\cA_Q$ is a
projective $C^\infty(Z)$- and $\cA_Q$-module of finite rank, and
so is its $\cA_Q$-dual $\cS^1[Q;Z]$ \cite{cmp04,book05}. Hence,
$\gd\cA_Q$ is the $\cA_Q$-dual of $\cS^1[Q;Z]$ and, consequently,
$\cS^1[Q;Z]$ is generated by elements $df$, $f\in \cA_Q$
\cite{book05}.
\end{proof}

This fact is essential for our consideration because of the
following \cite{book05}.

\begin{lem} \label{v30} \mar{v30}
Given a ring $R$, let $\cK$, $\cK'$ be $R$-rings and  $\cA$,
$\cA'$ the Grassmann algebras over  $\cK$ and $\cK'$,
respectively. Then any homomorphism $\rho: \cA\to \cA'$ yields the
homomorphism of the minimal Chevalley--Eilenberg differential
calculus over a $\Bbb Z_2$-graded $R$-ring $\cA$ to that over
$\cA'$ given by the map $da \mapsto d(\rho(a))$, $a\in\cA$. This
map provides a monomorphism if $\rho$ is a monomorphism of
$R$-algebras
\end{lem}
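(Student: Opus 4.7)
The plan is to define the map on a set of generators, extend it multiplicatively, verify that it respects all the relations that cut out the minimal calculus, and then deal with the monomorphism claim as a separate step.

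Since the minimal Chevalley--Eilenberg differential calculus $\cS^*\cA$ is by definition the sub-DBGA generated by its degree-zero part $\cA$ together with the elements $da$, $a\in\cA$, every element of $\cS^*\cA$ is a finite $R$-linear combination of monomials $a_0\,da_1\w\cdots\w da_k$. I would set
\[
\rho_*\bigl(a_0\,da_1\w\cdots\w da_k\bigr) \;=\; \rho(a_0)\,d\rho(a_1)\w\cdots\w d\rho(a_k)
\]
and first prove well-definedness. The defining relations among such monomials in $\cS^*\cA$ are exactly those forced by: (i) $R$-linearity in each slot, (ii) the graded Leibniz rule $d(aa')=(da)a'+(-1)^{[a]}a\,da'$, (iii) the graded commutation rules (\ref{v21}), and (iv) $d^2=0$. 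Each of these is sent by $\rho_*$ to the corresponding relation in $\cS^*\cA'$ because $\rho$ is an $R$-algebra homomorphism preserving the product, the $R$-action, and the $\Bbb Z_2$-grading. The identity $\rho_*\circ d=d\circ\rho_*$ then holds on generators by construction and extends to all of $\cS^*\cA$ via the Leibniz rule, so $\rho_*$ is a cochain morphism of DBGAs.

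For the monomorphism statement, suppose $\rho$ is injective. One would like to present $\cS^k\cA$ as the quotient of the free $\cA$-module on symbols $da_1\w\cdots\w da_k$ modulo the relations (i)--(iv), so that any element in $\ker\rho_*$ is witnessed by a relation among the $d\rho(a_i)$ in $\cS^k\cA'$; since all such relations are generated by the four types above, and since $\rho$ is injective on $\cA$, each such relation pulls back to a relation in $\cS^k\cA$, forcing the element to vanish already. This is the argument I would formalize, working degree by degree.

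The main obstacle is precisely the injectivity step. The construction of $\rho_*$ and the verification that it is a well-defined cochain morphism are essentially bookkeeping, but showing that the induced map of minimal calculi is a monomorphism requires a careful presentation-level argument: one must check that every relation in $\cS^*\cA'$ between elements of the form $\rho(a_0)\,d\rho(a_1)\w\cdots\w d\rho(a_k)$ is already visible in $\cS^*\cA$. I would spend most of the effort here, ensuring the pullback of relations uses only the ring-homomorphism and injectivity hypotheses on $\rho$ and no additional structure (such as flatness or freeness of $\cA$ over $\cK$).
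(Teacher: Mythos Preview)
The paper does not supply a proof of this lemma; it is stated with a reference to \cite{book05} and then used. So there is no internal argument to compare against, and your proposal must be judged on its own.

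Your construction of $\rho_*$ and the verification that it is a well-defined morphism of DBGAs are correct and essentially amount to the universal property of the minimal differential calculus: the composite $d\circ\rho:\cA\to\cS^1\cA'$ is an $R$-derivation into an $\cA$-module (via $\rho$), hence factors through $\cS^1\cA$, and the extension to higher degrees follows multiplicatively. That half of the argument is fine.

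The injectivity step, however, has a real gap, and it is exactly where you say you would concentrate effort while \emph{avoiding} any freeness or flatness hypotheses. That strategy cannot succeed. For arbitrary (graded-)commutative $R$-rings, an injective ring map need not induce an injection on the minimal differential calculus: already for ordinary K\"ahler differentials, the inclusion $k[x]\hookrightarrow k[x,y]/(y^2-x)$ with $\mathrm{char}\,k=2$ sends $dx$ to $d(y^2)=2y\,dy=0$. Your ``pull back the relations'' heuristic fails because a relation among the $d\rho(a_i)$ in $\cS^*\cA'$ may be derivable only through intermediate elements of $\cA'$ not lying in the image of $\rho$; injectivity of $\rho$ alone gives no control over such derivations.

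The Grassmann-algebra hypothesis in the statement is therefore not incidental and must be used. What makes the lemma true in this setting is that $\cA=\w V$ for a $\cK$-module $V$ (and similarly for $\cA'$), so the minimal calculus has an explicit model with $\cS^1\cA$ free on a concrete basis built from generators of $\cK$ and $V$; the map $\rho_*$ can then be written in coordinates and its injectivity read off from that of $\rho$ on $\cK$ and on $V$. I would reorganize the second half of your argument around this structural description rather than a generic presentation-by-relations argument.
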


One can think of elements of the DBGA $\cS^*[Q;Z]$ as being
Grassmann-graded or, simply, graded) differential forms on $Z$ as
follows. Given an open subset $U\subset Z$, let $\cA_U$ be the
Grassmann algebra of sections of the sheaf $\gA_Q$ over $U$, and
let $\cS^*[Q;U]$ be the corresponding Chevalley--Eilenberg
differential calculus over $\cA_U$. Given an open set $U'\subset
U$, the restriction morphisms $\cA_U\to\cA_{U'}$ yield the
restriction morphism  of the DBGAs $\cS^*[Q;U]\to \cS^*[Q;U']$.
Thus,  we obtain the presheaf $\{U,\cS^*[Q;U]\}$ of DBGAs on a
manifold $Z$ and the sheaf $\gS^*[Q;Z]$ of DBGAs of germs of this
presheaf. Since $\{U,\cA_U\}$ is the canonical presheaf of the
sheaf $\gA_Q$, the canonical presheaf of $\gS^*[Q;Z]$ is
$\{U,\cS^*[Q;U]\}$. In particular, $\cS^*[Q;Z]$ is the DBGA of
global sections of the sheaf $\gS^*[Q;Z]$, and  there is the
restriction morphism $\cS^*[Q;Z]\to \cS^*[Q;U]$ for any open
$U\subset Z$.

Due to this restriction morphism, elements of the DBGA
$\cS^*[Q;Z]$ can be written in the following local form. Given
bundle coordinates $(z^A,q^a)$ on $Q$ and the corresponding fiber
basis $\{c^a\}$ for $Q^*\to X$, the tuple $(z^A, c^a)$ is called a
local basis for the graded manifold $(Z,\gA_Q)$ \cite{bart}. With
respect to this basis, graded functions read
\mar{v23}\beq
f=\op\sum_{k=0} \frac1{k!}f_{a_1\ldots a_k}c^{a_1}\cdots c^{a_k},
\label{v23}
\eeq
where $f_{a_1\cdots a_k}$ are smooth real functions on $Z$, and we
omit the symbol of the exterior product of elements $c^a$. Due to
the canonical splitting $VQ= Q\times Q$, the fiber basis
$\{\dr_a\}$ for vertical tangent bundle $VQ\to Q$ of $Q\to Z$ is
the dual of $\{c^a\}$. Then graded derivations take the local form
$u= u^A\dr_A + u^a\dr_a$, where $u^A, u^a$ are local graded
functions. They act on graded functions (\ref{v23}) by the rule
\mar{cmp50'}\beq
u(f_{a\ldots b}c^a\cdots c^b)=u^A\dr_A(f_{a\ldots b})c^a\cdots c^b
+u^d f_{a\ldots b}\dr_d\rfloor (c^a\cdots c^b). \label{cmp50'}
\eeq
Relative to the dual local bases $\{dz^A\}$ for $T^*Z$ and
$\{dc^b\}$ for $Q^*$, graded one-forms read $\f=\f_A dz^A +
\f_adc^a$. The duality morphism is given by the interior product
\be
u\rfloor \f=u^A\f_A + (-1)^{[\f_a]}u^a\f_a, \qquad u\in \gd\cA_Q,
\qquad \f\in \cS^1[Q;Z].
\ee
The Chevalley--Eilenberg coboundary operator $d$, called the
graded exterior differential, reads
\be
d\f=dz^A\w \dr_A\f + dc^a\w \dr_a\f,
\ee
where the derivations $\dr_A$ and $\dr_a$ act on coefficients of
graded differential forms by the formula (\ref{cmp50'}), and they
are graded commutative with the graded differential forms $dz^A$
and $dc^a$.

Since $\cS^*[Q;Z]$ is a DBGA  of graded differential forms on $Z$,
one can obtain its de Rham cohomology by means of the abstract de
Rham theorem as follows.

\begin{theo} \label{j30} \mar {j30}
The cohomology of the de Rham complex (\ref{v1}) of the DBGA
$\cS^*[Q;Z]$ equals the de Rham cohomology of the body $Z$.
\end{theo}

\begin{proof} We have the complex
\mar{1033}\beq
0\to\Bbb R\ar \gS^0[Q;Z] \ar^d \gS^1[Q;Z]\ar^d\cdots
\gS^k[Q;Z]\ar^d\cdots. \label{1033}
\eeq
of sheafs of germs of graded differential forms on $Z$.  Its
members $\gS^k[Q;Z]$ are sheaves of $C^\infty(Z)$-modules on $Z$
and, consequently, are fine and acyclic. Furthermore, the
Poincar\'e lemma for graded differential forms holds \cite{bart}.
It follows that the complex (\ref{1033}) is a fine resolution of
the constant sheaf $\Bbb R$ on the manifold $Z$.  Then, by virtue
of Theorem \ref{+132}, there is an isomorphism
\mar{+136}\beq
H^*(\cS^*[Q;Z])=H^*(Z;\Bbb R)=H^*_{DR}(Z) \label{+136}
\eeq
of the cohomology of the complex (\ref{v1}) to the de Rham
cohomology of $Z$. Moreover, the cohomology isomorphism
(\ref{+136}) accompanies the cochain monomorphism of the de Rham
complex of $\cO^*Z$ to the complex (\ref{v1}). Hence, any closed
graded differential form is split into a sum $\f=\si +d\xi$ of a
closed exterior form $\si$ on $Z$ and an exact graded differential
form.
\end{proof}

\section{Graded infinite order jet manifold}

As was mentioned above, we consider graded manifolds of jets of
smooth fiber bundles, but not jets of fibered graded manifolds. To
motivate this construction, let us return to the case of even
variables when $Y\to X$ is a vector bundle. The jet bundles
$J^kY\to X$ are also vector bundles. Let $\cP^*_\infty \subset
\cO^*_\infty$ be a subalgebra of exterior forms on these bundles
whose coefficients are polynomial in fiber coordinates. In
particular, $\cP^0_\infty$ is the ring of polynomials of these
coordinates with coefficients in the ring $C^\infty(X)$. One can
associate to such a polynomial of degree $m$ a section of the
symmetric product $\op\vee^m (J^kY)^*$ of the dual to some jet
bundle $J^kY\to X$, and {\it vice versa}. Moreover, any element of
$\cP^*_\infty$ is an element of the Chevalley--Eilenberg
differential calculus over $\cP^0_\infty$. Following this example,
let $F\to X$ be a vector bundle, and let us consider graded
manifolds $(X,\cA_{J^rF})$ modelled over the vector bundles
$J^rF\to X$. There is the direct system of the corresponding DBGAs
\be
 \cS^*[F;X]\ar
\cS^*[J^1F;X]\ar\cdots \cS^*[J^rF;X]\ar\cdots,
\ee
whose direct limit $\cS^*_\infty[F;X]$ is the Grassmann-graded
counterpart of an even DGA $\cP^*_\infty$.

In a general setting, let us consider a composite bundle $F\to
Y\to X$ where $F\to Y$ is a vector bundle provided with bundle
coordinates $(x^\la, y^i, q^a)$. Jet manifolds $J^rF$ of $F\to X$
are vector bundles $J^rF\to J^rY$ coordinated by $(x^\la, y^i_\La,
q^a_\La)$, $0\leq |\La|\leq r$. Let $(J^rY,\gA_r)$ be a graded
manifold modelled over this vector bundle. Its local basis is
$(x^\la, y^i_\La, c^a_\La)$, $0\leq|\La|\leq r$. Let
$\cS^*_r[F;Y]$ be the DBGA of graded differential forms on the
graded manifold $(J^rY,\gA_r)$.

There is an epimorphism of graded manifolds $(J^{r+1}Y,\gA_{r+1})
\to (J^rY,\gA_r)$, seen as local-ringed spaces. It consists of the
surjection $\pi^{r+1}_r$ and the sheaf monomorphism
$\pi_r^{r+1*}\gA_r\to \gA_{r+1}$, where $\pi_r^{r+1*}\gA_r$ is the
pull-back onto $J^{r+1}Y$ of the topological fiber bundle
$\gA_r\to J^rY$. This sheaf monomorphism induces the monomorphism
of the canonical presheaves $\ol \gA_r\to \ol \gA_{r+1}$, which
associates to each open subset $U\subset J^{r+1}Y$ the ring of
sections of $\gA_r$ over $\pi^{r+1}_r(U)$. Accordingly, there is
the monomorphism of graded commutative rings $\cA_r \to
\cA_{r+1}$. By virtue of Lemmas \ref{v62} and \ref{v30}, this
monomorphism yields the monomorphism of DBGAs
\mar{v4}\beq
\cS^*_r[F;Y]\to \cS^*_{r+1}[F;Y]. \label{v4}
\eeq
As a consequence, we have the direct system (\ref{j2}) of DBGAs.
Its direct limit $\cS^*_\infty[F;Y]$  is a DBGA of all graded
differential forms $\f\in \cS^*[F_r;J^rY]$ on graded manifolds
$(J^rY,\gA_r)$ modulo monomorphisms (\ref{v4}). Its elements obey
the relations (\ref{v21}).

The monomorphisms $\cO^*_r\to \cS^*_r[F;Y]$ provide a monomorphism
of the direct system (\ref{5.7}) to the direct system (\ref{j2})
and, consequently, the monomorphism
\mar{v7}\beq
\cO^*_\infty Y\to \cS^*_\infty[F;Y] \label{v7}
\eeq
of their direct limits. In particular, $\cS^*_\infty[F;Y]$ is an
$\cO^0_\infty Y$-algebra. Accordingly, the body epimorphisms
$\cS^*_r[F;Y]\to \cO^*_r$ yield the epimorphism of
$\cO^0_\infty$-algebras
\mar{v7'}\beq
\cS^*_\infty[F;Y]\to \cO^*_\infty.  \label{v7'}
\eeq
The morphisms (\ref{v7}0 and (\ref{v7'}) are cochain morphisms
between the de Rham complex (\ref{5.13}) of the DGA $\cO^*_\infty$
and the de Rham complex
\mar{g110}\beq
0\to\Bbb R\ar \cS^0_\infty[F;Y]\ar^d \cS^1_\infty[F;Y]\cdots
\ar^d\cS^k_\infty[F;Y] \ar\cdots \label{g110}
\eeq
of the DBGA $\cS^0_\infty[F;Y]$. Moreover, the corresponding
homomorphisms of cohomology groups of these complexes are
isomorphisms as follows.

\begin{theo} \label{v9} \mar{v9} There is an isomorphism
\mar{v10}\beq
H^*(\cS^*_\infty[F;Y])= H^*(Y) \label{v10}
\eeq
of cohomology $H^*(\cS^*_\infty[F;Y])$ of the de Rham complex
(\ref{g110}) to the de Rham cohomology  $H^*_{DR}(Y)$ of $Y$.
\end{theo}

\begin{proof}
The complex (\ref{g110}) is the direct limit of the de Rham
complexes of the DBGAs $\cS^*_r[F;Y]$. Therefore, the direct limit
of cohomology groups of these complexes is the cohomology of the
de Rham complex (\ref{g110}). By virtue of Theorem \ref{j30},
cohomology of the de Rham complex of $\cS^*_r[F;Y]$ for any $r$
equals the de Rham cohomology of $J^rY$ and, consequently, that of
$Y$, which is the strong deformation retract of any $J^rY$. Hence,
the isomorphism (\ref{v10}) holds.
\end{proof}

It follows that any closed graded differential form $\f\in
\cS^*_\infty[F;Y]$ is split into the sum $\f=d\si +d\xi$ of a
closed exterior form $\si$ on $Y$ and an exact graded differential
form.

One can think of  elements of $\cS^*_\infty[F;Y]$ as being graded
differential forms on the infinite order jet manifold $J^\infty
Y$. Indeed, let $\gS^*_r[F;Y]$ be the sheaf of DBGAs on $J^rY$ and
$\ol\gS^*_r[F;Y]$ its canonical presheaf. Then the above mentioned
presheaf monomorphisms $\ol \gA_r\to \ol \gA_{r+1}$, yield the
direct system of presheaves
\mar{v15}\beq
\ol\gS^*[F;Y]\ar \ol\gS^*_1[F;Y] \ar\cdots \ol\gS^*_r[F;Y]
\ar\cdots, \label{v15}
\eeq
whose direct limit $\ol\gS_\infty^*[F;Y]$ is a presheaf of DBGAs
on the infinite order jet manifold $J^\infty Y$. Let
$\gQ^*_\infty[F;Y]$ be the sheaf of DBGAs of germs of the presheaf
$\ol\gS_\infty^*[F;Y]$.  One can think of the pair $(J^\infty Y,
\gQ^0_\infty[F;Y])$ as being a graded manifold, whose body is the
infinite order jet manifold $J^\infty Y$ and the structure sheaf
$\gQ^0_\infty[F;Y]$ is the sheaf of germs of graded functions on
graded manifolds $(J^rY,\gA_r)$. We agree to call it the graded
infinite order jet manifold. The structure module
$\cQ^*_\infty[F;Y]$ of sections of $\gQ^*_\infty[F;Y]$ is a DBGA
such that, given an element $\f\in \cQ^*_\infty[F;Y]$ and a point
$z\in J^\infty Y$, there exist an open neighbourhood $U$ of $z$
and a graded exterior form $\f^{(k)}$ on some finite order jet
manifold $J^kY$ so that $\f|_U= \pi^{\infty*}_k\f^{(k)}|_U$. In
particular, there is the monomorphism $\cS^*_\infty[F;Y]
\to\cQ^*_\infty[F;Y]$.

Due to this monomorphism, one can restrict $\cS^*_\infty[F;Y]$ to
the coordinate chart (\ref{j3}) and say that $\cS^*_\infty[F;Y]$
as an $\cO^0_\infty Y$-algebra is locally generated by  the
elements
\be
(1, c^a_\La,
dx^\la,\th^a_\La=dc^a_\La-c^a_{\la+\La}dx^\la,\th^i_\La=
dy^i_\La-y^i_{\la+\La}dx^\la), \qquad 0\leq |\La|,
\ee
where $c^a_\La$, $\th^a_\La$ are odd and $dx^\la$, $\th^i_\La$ are
even. We agree to call $(y^i,c^a)$ the local basis for
$\cS^*_\infty[F;Y]$. Let the collective symbol $s^A$ stand for its
elements. Accordingly, the notation $s^A_\La$ and
$\th^A_\La=ds^A_\La- s^A_{\la+\La}dx^\la$ is introduced. For the
sake of simplicity, we further denote $[A]=[s^A]$.

Similarly to $\cO^*_\infty$, the DBGA $\cS^*_\infty[F;Y]$ is
decomposed into $\cS^0_\infty[F;Y]$-modules
$\cS^{k,r}_\infty[F;Y]$ of $k$-contact and $r$-horizontal graded
forms. Accordingly, the graded exterior differential $d$ on
$\cS^*_\infty[F;Y]$ falls into the sum $d=d_H+d_V$ of the total
and vertical differentials, where
\be
d_H(\f)=dx^\la\w d_\la(\f), \qquad d_\la = \dr_\la +
\op\sum_{0\leq|\La|} s^A_{\la+\La}\dr_A^\La.
\ee

Let  $\gd \cS^0_\infty[F;Y]$ be a $\cS^0_\infty[F;Y]$-module of
graded derivation of the $\Bbb R$-ring $\cS^0_\infty[F;Y]$. It is
a real Lie superalgebra. Similarly to Proposition \ref{g62}, one
can show that the DBGA $\cS^*_\infty[F;Y]$ is minimal differential
calculus over the graded commutative $\Bbb R$-ring
$\cS^0_\infty[F;Y]$. The interior product $\vt\rfloor\f$ and the
Lie derivative $\bL_\vt\f$, $\f\in\cS^*_\infty[F;Y]$, $\vt\in \gd
\cS^0_\infty[F;Y]$, obey the relations
\be
&& \vt\rfloor(\f\w\si)=(\vt\rfloor \f)\w\si
+(-1)^{|\f|+[\f][\vt]}\f\w(\vt\rfloor\si), \qquad \f,\si\in
\cS^*_\infty[F;Y] \\
&& \bL_\vt\f=\vt\rfloor d\f+ d(\vt\rfloor\f), \qquad
\bL_\vt(\f\w\si)=\bL_\vt(\f)\w\si
+(-1)^{[\vt][\f]}\f\w\bL_\vt(\si).
\ee
A graded derivation $\vt\in \gd \cS^0_\infty[F;Y]$ is called
contact if the Lie derivative $\bL_\vt$ preserves the ideal of
contact graded forms of the DBGA $\cS^*_\infty[F;Y]$. With respect
to the local basis $(x^\la,s^A_\La, dx^\la,\th^A_\La)$ for the
DBGA $\cS^*_\infty[F;Y]$, any contact graded derivation takes the
form
\mar{g105}\beq
\vt=\up_H+\up_V=\vt^\la d_\la + [\vt^A\dr_A +\op\sum_{|\La|>0}
d_\La(\vt^A-s^A_\m\vt^\m)\dr_A^\La], \label{g105}
\eeq
where $\up_H$ and $\up_V$ denotes its horizontal and vertical
parts. Furthermore, one can justify that any vertical contact
graded derivation
\mar{j40}\beq
\vt= \vt^A\dr_A +\op\sum_{|\La|>0} d_\La\vt^A\dr_A^\La \label{j40}
\eeq
satisfies the relations
\mar{g233}\beq
 \vt\rfloor d_H\f=-d_H(\vt\rfloor\f), \qquad
 \bL_\vt(d_H\f)=d_H(\bL_\vt\f), \qquad \f\in\cS^*_\infty[F;Y].
\label{g233}
\eeq

\section{Grassmann-graded variational bicomplex}

Similarly to the DGA $\cO^*_\infty$, the DBGA $\cS^*_\infty[F;Y]$
is provided with the graded projection endomorphism
\be
\vr=\op\sum_{k>0} \frac1k\ol\vr\circ h_k\circ h^n, \qquad
\ol\vr(\f)= \op\sum_{0\leq|\La|} (-1)^{\nm\La}\th^A\w
[d_\La(\dr^\La_A\rfloor\f)], \qquad \f\in \cS^{>0,n}_\infty[F;Y],
\ee
such that $\vr\circ d_H=0$ and the nilpotent graded variational
operator $\dl=\vr\circ d$. With these operators the bicomplex BGDA
$\cS^{*,}_\infty[F;Y]$ is completed to the Grassmann-graded
variational bicomplex. We restrict our consideration to its short
variational subcomplex
\mar{g111}\beq
0\ar \Bbb R\ar \cS^0_\infty[F;Y]\ar^{d_H}\cS^{0,1}_\infty[F;Y]
\cdots \ar^{d_H} \cS^{0,n}_\infty[F;Y]\ar^\dl \bE_1, \quad
\bE_1=\vr(\cS^{1,n}_\infty[F;Y]), \label{g111}
\eeq
and its subcomplex of one-contact graded forms
\mar{g112}\beq
 0\to \cS^{1,0}_\infty[F;Y]\ar^{d_H} \cS^{1,1}_\infty[F;Y]\cdots
\ar^{d_H}\cS^{1,n}_\infty[F;Y]\ar^\vr \bE_1\to 0. \label{g112}
\eeq
One can think of its even elements
\mar{0709}\beq
L=\cL\om\in \cS^{0,n}_\infty[F;Y], \qquad \dl L= \th^A\w
\cE_A\om=\op\sum_{0\leq|\La|}
 (-1)^{|\La|}\th^A\w d_\La (\dr^\La_A L)\om\in \bE_1 \label{0709'}
\eeq
as being a Grassmann-graded Lagrangian and its Euler--Lagrange
operator, respectively.

\begin{theo} \label{v11} \mar{v11}
Cohomology of the complex (\ref{g111}) equals the de Rham
cohomology $H^*_{DR}(Y)$ of $Y$. The complex (\ref{g112}) is
exact.
\end{theo}

The proof of Theorem \ref{v11} follows the scheme of the proof of
Theorem \ref{g90}. It falls into three steps.

(i) We start with showing that the complexes (\ref{g111}) --
(\ref{g112}) are locally exact.

\begin{lem} \label{0465} \mar{0465}
If $Y=\Bbb R^{n+k}\to \Bbb R^n$, the complex (\ref{g111}) at all
terms, except $\Bbb R$, is exact.
\end{lem}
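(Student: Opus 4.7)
\medskip
\noindent\textbf{Proof plan.} The plan is to reduce to Lemma \ref{042} by transcribing, to the Grassmann-graded setting, the explicit contracting homotopies used in the non-graded case.

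Since $Y=\Bbb R^{n+k}$ is contractible and $F\to Y$ is a vector bundle, we may assume $F=\Bbb R^{n+k}\times \Bbb R^m\to Y$, so that the graded manifold $(J^rY,\gA_r)$ admits the global basis $(x^\la,y^i_\La,c^a_\La)$, $0\leq|\La|\leq r$. Accordingly, every element of $\cS^{*,*}_\infty[F;Y]$ is a finite $\cO^{*,*}_\infty$-linear combination of monomials in the odd generators $c^a_\La$ and the odd contact forms $\th^a_\La$, and in particular involves only finitely many odd jet variables.

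The key observation is that the defining formulas for $d_H$, $d_V$, $\dl$ and $\vr$ on $\cS^{*,*}_\infty[F;Y]$ are algebraically identical to those on $\cO^{*,*}_\infty$, once the index $A$ is allowed to label both even and odd generators and the duality $\dr^\La_A\rfloor \th^B_\Si=\dl^B_A\dl^\La_\Si$ is read in the graded sense. Hence the homotopy operators for $d_H$ and $\dl$ invoked in the proof of Lemma \ref{042}, namely Olver's formulas (5.72), (5.84), (5.109) in \cite{olv} and Tulczyjew's formula (4.5) in \cite{tul}, admit term-by-term transcriptions to $\cS^{*,*}_\infty[F;Y]$, provided one inserts Koszul signs $(-1)^{[A][B]}$ whenever two odd generators are transposed. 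These transcribed operators are well-defined on $\cS^{*,*}_\infty[F;Y]$ because each graded form involves only finitely many $c^a_\La$ and $\th^a_\La$, so all the sums over $|\La|$ appearing in the homotopy formulas truncate.

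Once the operators $K_H$ and $K_\dl$ are written down, the homotopy identities $\mathrm{id}=d_H K_H+K_H d_H$ on $\cS^{0,m<n}_\infty[F;Y]$ and $\mathrm{id}=\dl K_\dl+K_\dl\dl$ at $\bE_1$ follow from direct calculations parallel to the non-graded ones, the only modification being the systematic use of the graded Leibniz rule (\ref{v21}) and of the graded contraction identity $\vt\rfloor(\f\w\si)=(\vt\rfloor\f)\w\si+(-1)^{|\f|+[\f][\vt]}\f\w(\vt\rfloor\si)$ recorded above. I expect the main obstacle to be the bookkeeping of the Koszul signs that appear when the graded derivations $\dr^\La_A$ are commuted past a string of contact factors $\th^B_\Si$ inside $\ol\vr$; this is, however, routine once the non-graded calculation has been carried out. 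Combining these transcribed homotopies yields exactness of (\ref{g111}) at every term except $\Bbb R$, establishing the lemma.
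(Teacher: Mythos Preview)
Your proposal is correct and follows essentially the same approach as the paper: both establish local exactness via explicit contracting homotopies for $d_H$ and $\dl$. The only cosmetic difference is that the paper quotes the ready-made graded formulas from \cite{barn,drag} (recorded as (\ref{0470})--(\ref{0474}), with Olver's operator (\ref{0477}) noted as an alternative) rather than transcribing the non-graded Olver--Tulczyjew homotopies and inserting Koszul signs as you propose.
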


\begin{proof}
Referring to  \cite{barn,drag} for the proof, we summarize a few
formulas. Any horizontal graded form $\f\in \cS^{0,*}_\infty$
admits the decomposition
\mar{0471}\beq
\f=\f_0 + \wt\f, \qquad \wt\f=
\op\int^1_0\frac{d\la}{\la}\op\sum_{0\leq|\La|}s^A_\La
\dr^\La_A\f, \label{0471}
\eeq
where $\f_0$ is an exterior form on $\Bbb R^{n+k}$. Let $\f\in
\cS^{0,m<n}_\infty$ be $d_H$-closed.  Then its component $\f_0$
(\ref{0471}) is an exact exterior form on $\Bbb R^{n+k}$  and
$\wt\f=d_H\xi$, where $\xi$ is given by the following expressions.
Let us introduce the operator
\mar{0470}\beq
D^{+\nu}\wt\f=\op\int^1_0\frac{d\la}{\la}\sum_{0\leq k}
k\dl^\nu_{(\m_1}\dl^{\al_1}_{\m_2}\cdots\dl^{\al_{k-1}}_{\m_k)}
\la s^A_{(\al_1\ldots\al_{k-1})}
\dr_A^{\m_1\ldots\m_k}\wt\f(x^\m,\la s^A_\La, dx^\m). \label{0470}
\eeq
The relation $[D^{+\nu},d_\m]\wt\f=\dl^\nu_\m\wt\f$ holds, and
leads to the desired expression
\mar{0473}\beq
\xi=\op\sum_{k=0}\frac{(n-m-1)!}{(n-m+k)!}D^{+\nu} P_k
\dr_\nu\rfloor\wt\f, \qquad P_0=1, \quad
 P_k=d_{\nu_1}\cdots d_{\nu_k}D^{+\nu_1}\cdots
D^{+\nu_k}.
 \label{0473}
\eeq
Now let $\f\in \cS^{0,n}_\infty$ be a graded density such that
$\dl\f=0$. Then its component $\f_0$ (\ref{0471}) is an exact
$n$-form on $\Bbb R^{n+k}$ and $\wt\f=d_H\xi$, where $\xi$ is
given by the expression
\mar{0474}\beq
\xi=\op\sum_{|\La|\geq 0}\op\sum_{\Si+\Xi=\La}(-1)^{|\Si|}s^A_\Xi
d_\Si\dr^{\m+\La}_A\wt\f\om_\m. \label{0474}
\eeq
Since elements of $\cS^*_\infty$ are polynomials in $s^A_\La$, the
sum in the expression (\ref{0473}) is finite. However, the
expression (\ref{0473}) contains a $d_H$-exact summand which
prevents its extension to $\cO^*_\infty$. In this respect, we also
quote the homotopy operator (5.107) in \cite{olv} which leads to
the expression
\mar{0477}\ben
&&\xi=\op\int_0^1 I(\f)(x^\m,\la s^A_\La,
dx^\m)\frac{d\la}{\la}, \label{0477}\\
&& I(\f)=\op\sum_{0\leq|\La|}\op\sum_\m
\frac{\La_\m+1}{n-m+|\La|+1} d_\La[ \op\sum_{0\leq|\Xi|} (-1)^\Xi
\frac{(\m+\La+\Xi)!}{(\m+\La)!\Xi!}s^A d_\Xi
\dr_A^{\m+\La+\Xi}(\dr_\m\rfloor \f)], \nonumber
\een
where $\La!=\La_{\m_1}!\cdots \La_{\m_n}!$ and $\La_\m$ denotes
the number of occurrences of the index $\m$ in $\La$ \cite{olv}.
The graded forms (\ref{0474}) and (\ref{0477}) differ in a
$d_H$-exact graded form.
\end{proof}

\begin{lem} \label{g220} \mar{g220}
If $Y=\Bbb R^{n+k}\to \Bbb R^n$, the complex (\ref{g112}) is
exact.
\end{lem}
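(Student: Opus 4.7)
My plan is to adapt the proof of Lemma \ref{0465} to the one-contact sector, constructing graded homotopies for $d_H$ and handling the final map $\vr$ by its projector structure. The argument has three pieces.

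First, exactness at $\bE_1$ is immediate from the definition $\bE_1=\vr(\cS^{1,n}_\infty[F;Y])$, which makes $\vr$ surjective. Next, I would handle exactness at $\cS^{1,n}_\infty[F;Y]$. The inclusion $d_H(\cS^{1,n-1}_\infty[F;Y])\subset\Ker\vr$ follows from the identity $\vr\circ d_H=0$. For the converse, write any $\f\in\cS^{1,n}_\infty[F;Y]$ in the local form $\f=\op\sum_\La \th^A_\La\w \f_A^\La\om$ with $\f_A^\La\in\cS^0_\infty[F;Y]$. Iterated integration by parts in the total derivatives (using $\th^A_{\la+\La}=d_\la\th^A_\La$ modulo contact terms and the identities (\ref{0480}) extended to the graded setting) converts every summand of positive order $|\La|>0$ into $(-1)^{|\La|}\th^A\w d_\La \f_A^\La\w\om$ plus a $d_H$-exact remainder. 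Summing over $\La$ gives $\f=\vr(\f)+d_H\xi$ for an explicit $\xi\in\cS^{1,n-1}_\infty[F;Y]$, so $\vr(\f)=0$ forces $\f=d_H\xi$.

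For the intermediate terms $0<m<n$, any $\f\in\cS^{1,m}_\infty[F;Y]$ admits a local decomposition $\f=\op\sum_\La \th^A_\La\w\psi^\La_A$ with horizontal graded forms $\psi^\La_A$ of horizontal degree $m$. I would apply the operators $D^{+\nu}$ and $P_k$ from (\ref{0470})--(\ref{0473}) coefficient-wise to each $\psi^\La_A$, using the decomposition (\ref{0471}) adapted to the polynomial dependence on the graded coordinates $c^a_\La$. Because $d_\m$ is an even derivation, the bracket relation $[D^{+\nu},d_\m]=\dl^\nu_\m$ survives on graded horizontal forms, and combining the resulting operators yields a homotopy $h^m:\cS^{1,m}_\infty[F;Y]\to\cS^{1,m-1}_\infty[F;Y]$ with $d_H\circ h^m+h^{m+1}\circ d_H=\id$ on one-contact forms of horizontal degree $m$. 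Any $d_H$-closed $\f\in\cS^{1,m}_\infty[F;Y]$ is then $d_H$-exact. For the initial term $\cS^{1,0}_\infty[F;Y]$, injectivity of $d_H$ is a counting argument: if $\f=\op\sum_{|\La|\leq r}\th^A_\La\, f_\La^A$ has maximal jet order $r$ and $d_H\f=0$, the linearly independent top-order contact components $dx^\la\w\th^A_{\la+\La}$ with $|\La|=r$ force $f_\La^A=0$, and induction on $r$ gives $\f=0$.

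The main obstacle is sign bookkeeping in the graded setting, since interior products by graded derivations satisfy $\vt\rfloor(\f\w\si)=(\vt\rfloor\f)\w\si+(-1)^{|\f|+[\f][\vt]}\f\w(\vt\rfloor\si)$. However, since $d_H$ is even of bidegree $(0,1)$ and the total derivatives $d_\la$ carry zero Grassmann parity, the sign structure of the classical homotopy formulas of Olver and Tulczyjew is inherited intact on the horizontal coefficients $\psi^\La_A$; the contact factors $\th^A_\La$ contribute only overall signs that commute past the horizontal operators. The detailed verification follows the computation in \cite{barn}.
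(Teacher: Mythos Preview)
Your treatment of the endpoints is fine and matches the paper: surjectivity of $\vr$ onto $\bE_1$, the integration-by-parts formula giving $\f=\vr(\f)+d_H\xi$ at the top horizontal degree (this is the paper's formula (\ref{0449})), and the order-counting argument for injectivity at $m=0$.

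The gap is at the intermediate terms $0<m<n$. Applying the operators $D^{+\nu}$, $P_k$ of (\ref{0470})--(\ref{0473}) \emph{coefficient-wise} to the $\psi^\La_A$ in $\f=\sum\th^A_\La\w\psi^\La_A$ does not yield a homotopy for the one-contact complex, because $d_H$ does not act coefficient-wise on such a decomposition: from $d_H\th^A_\La=dx^\la\w\th^A_{\la+\La}$ one has
\[
d_H\Big(\sum_\La\th^A_\La\w\psi^\La_A\Big)=\sum_\La\th^A_\La\w\big(dx^\la\w\psi^{\La-\la}_A - d_H\psi^\La_A\big),
\]
so $d_H\f=0$ imposes a coupled recursion on the family $\{\psi^\La_A\}$, not individual $d_H$-closedness. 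A homotopy built only from the horizontal operators acting on each $\psi^\La_A$ cannot satisfy $d_Hh+hd_H=\id$ on one-contact forms without a mechanism that tracks the index shift $\La\mapsto\la+\La$.

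The paper supplies exactly that mechanism: it introduces auxiliary variables $\ol s^A_\La$ of the same Grassmann parity as $s^A_\La$ and replaces the one-contact form $\f=\sum\f^\La_A\w\th^A_\La$ by the horizontal form $\ol\f=\sum\f^\La_A\,\ol s^A_\La$, linear in the new variables. A modified total differential $\ol d_H=d_H+dx^\la\w\sum\ol s^A_{\la+\La}\ol\dr^\La_A$ satisfies $\ol{d_H\f}=\ol d_H\ol\f$, so $d_H$-closedness of $\f$ becomes ordinary $\ol d_H$-closedness of $\ol\f$ in the enlarged system, and Lemma \ref{0465} applies directly. Since the homotopy (\ref{0473}) preserves linearity in $\ol s^A_\La$, the resulting $\ol\xi$ translates back to a one-contact $\xi$ with $\f=d_H\xi$. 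This packaging is the missing idea; your ``combine the resulting operators'' step would, if carried out, amount to reinventing it.
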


\begin{proof}
The fact that a $d_H$-closed graded $(1,m)$-form $\f\in
\cS^{1,m<n}_\infty$ is $d_H$-exact is derived from Lemma
\ref{0465} as follows. We write
\mar{0445}\beq
\f=\sum\f_A^\La\w \th^A_\La, \label{0445}
\eeq
where $\f_A^\La\in \cS^{0,m}_\infty$ are horizontal graded
$m$-forms. Let us introduce additional variables $\ol s^A_\La$ of
the same Grassmann parity as $s^A_\La$. Then one can associate to
each graded $(1,m)$-form $\f$ (\ref{0445}) a unique horizontal
graded $m$-form
\mar{0446}\beq
\ol\f=\sum\f_A^\La\ol s^A_\La, \label{0446}
\eeq
whose coefficients are linear in the variables $\ol s^A_\La$, and
{\it vice versa}.  Let us consider the modified total differential
\be
\ol d_H=d_H + dx^\la\w \op\sum_{0<|\La|}\ol
s^A_{\la+\La}\ol\dr_A^\La,
\ee
acting on graded forms (\ref{0446}), where $\ol\dr^\La_A$ is the
dual of $d\ol s^A_\La$. Comparing the equality $\ol d_H\ol
s^A_\La=dx^\la s^A_{\la+\La}$ and the last equality (\ref{0480}),
one can  easily justify that $\ol{d_H\f}=\ol d_H\ol\f$. Let a
graded $(1,m)$-form $\f$ (\ref{0445}) be $d_H$-closed. Then the
associated horizontal graded $m$-form $\ol \f$ (\ref{0446}) is
$\ol d_H$-closed and, by virtue of Lemma \ref{0465}, it is $\ol
d_H$-exact, i.e., $\ol \f= \ol d_H \ol\xi$, where $\ol\xi$ is a
horizontal graded $(m-1)$-form given by the expression
(\ref{0473}) depending  on additional variables $\ol s^A_\La$. A
glance at this expression shows that, since $\ol\f$ is linear in
the variables $\ol s^A_\La$, so is $\ol\xi=\sum\xi_A^\La\ol
s^A_\La$. It follows that $\f=d_H\xi$ where $\xi=\sum\xi_A^\La\w
\th^A_\La$. It remains to prove the exactness of the complex
(\ref{g112}) at the last term $\bE_1$. If
\be
\vr(\si)=\op\sum_{0\leq|\La|}(-1)^{|\La|}\th^A\w
[d_\La(\dr_A^\La\rfloor\si)]= \op\sum_{0\leq|\La|}
(-1)^{|\La|}\th^A\w [d_\La\si_A^\La]\om=0, \qquad  \si\in
\cS^{1,n}_\infty,
\ee
a direct computation gives
\mar{0449}\beq
\si=d_H\xi,\qquad  \xi=-\op\sum_{0\leq|\La|}\op\sum_{\Si+\Xi=\La}
(-1)^{|\Si|}\th^A_{\Xi}\w d_\Si\si^{\m+\La}_A \om_\m. \label{0449}
\eeq
\end{proof}

\begin{rem}
The proof of Lemma \ref{g220} fails to be extended to complexes of
higher contact forms because the products $\th^A_\La\w\th^B_\Si$
and $s^A_\La s^B_\Si$ obey different commutation rules.
\end{rem}

(ii) Let us now prove Theorem \ref{v11} for the DBGA
$\cQ^*_\infty[F;Y]$. Similarly to $\cS^*_\infty[F;Y]$, the sheaf
$\gQ^*_\infty[F;Y]$ and the DBGA $\cQ^*_\infty[F;Y]$ are split
into the Grassmann-graded variational bicomplexes. We consider
their subcomplexes
\mar{v35-8}\ben
&& 0\ar \Bbb R\ar \gQ^0_\infty[F;Y]\ar^{d_H}\gQ^{0,1}_\infty[F;Y]
\cdots \ar^{d_H} \gQ^{0,n}_\infty[F;Y]\ar^\dl {\got E}_1,
\label{v35}\\
&& 0\to \gQ^{1,0}_\infty[F;Y]\ar^{d_H} \gQ^{1,1}_\infty[F;Y]\cdots
\ar^{d_H}\gQ^{1,n}_\infty[F;Y]\ar^\vr {\got E}_1\to 0, \label{v36}\\
&& 0\ar \Bbb R\ar \cQ^0_\infty[F;Y]\ar^{d_H}\cQ^{0,1}_\infty[F;Y]
\cdots \ar^{d_H} \cQ^{0,n}_\infty[F;Y]\ar^\dl \G({\got E}_1),
\label{v37} \\
&&  0\to \cQ^{1,0}_\infty[F;Y]\ar^{d_H}
\cQ^{1,1}_\infty[F;Y]\cdots \ar^{d_H}\cQ^{1,n}_\infty[F;Y]\ar^\vr
\G({\got E}_1)\to 0, \label{v38}
\een
where ${\got E}_1 =\vr(\gQ^{1,n}_\infty[F;Y])$. By virtue of
Lemmas \ref{0465} and \ref{g220}, the complexes (\ref{v35}) --
(\ref{v36}) at all terms, except $\Bbb R$, are exact. The terms
$\gQ^{*,*}_\infty[F;Y]$ of the complexes (\ref{v35}) --
(\ref{v36}) are sheaves of $\cQ^0_\infty$-modules. Since $J^\infty
Y$ admits the partition of unity just by elements of
$\cQ^0_\infty$, these sheaves are fine and, consequently, acyclic.
By virtue of the abstract de Rham theorem (see Appendix B),
cohomology of the complex (\ref{v37}) equals the cohomology of
$J^\infty Y$ with coefficients in the constant sheaf $\Bbb R$ and,
consequently, the de Rham cohomology of $Y$ in accordance with
isomorphisms (\ref{j19}). Similarly, the complex (\ref{v38}) is
proved to be exact.

(iii) It remains to prove that cohomology of the complexes
(\ref{g111}) -- (\ref{g112}) equals that of the complexes
(\ref{v37}) -- (\ref{v38}). The proof of this fact
straightforwardly follows the proof of Theorem \ref{g90}, and it
is a slight modification of the proof of \cite{cmp04}, Theorem
4.1, where graded exterior forms on the infinite order jet
manifold $J^\infty Y$ of an affine bundle are treated as those on
$X$.

\begin{prop} \label{cmp26} \mar{cmp26}
Every $d_H$-closed graded form $\f\in\cS^{0,m<n}_\infty[F;Y]$
falls into the sum
\mar{g214}\beq
\f=h_0\si + d_H\xi, \qquad \xi\in \cS^{0,m-1}_\infty[F;Y],
\label{g214}
\eeq
where $\si$ is a closed $m$-form on $Y$. Any $\dl$-closed graded
density (e.g., a variationally trivial Grassmann-graded
Lagrangian) $L\in \cS^{0,n}_\infty[F;Y]$ is the sum
\mar{g215}\beq
L=h_0\si + d_H\xi, \qquad \xi\in \cS^{0,n-1}_\infty[F;Y],
\label{g215}
\eeq
where $\si$ is a closed $n$-form on $Y$. In particular, an odd
$\dl$-closed graded density is always $d_H$-exact.
\end{prop}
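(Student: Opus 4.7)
The approach is to combine Theorems \ref{v9} and \ref{v11} through the cochain morphism they naturally induce. The plan is to verify that the horizontalization projector $h_0$, in horizontal degrees $\leq n$, together with the variational projector $\vr$, in total degrees $>n$, defines a cochain morphism from the de Rham complex (\ref{g110}) of the DBGA $\cS^*_\infty[F;Y]$ to the variational complex (\ref{g111}). The essential algebraic identity to check is $h_0\circ d = d_H\circ h_0$ on $\cS^{*,m}_\infty[F;Y]$ for $m<n$, which follows from the splitting $d=d_H+d_V$ and the fact that $d_V$ strictly raises contact degree while $h_0$ projects onto $0$-contact forms; at the junction between the $n$-th and $(n+1)$-th positions one checks similarly that $\dl\circ h_0=\vr\circ d$ on $\cS^n_\infty[F;Y]$. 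Since the cohomology of both complexes equals $H^*_{DR}(Y)$ by the two cited theorems, the induced morphism of cohomology groups is forced to be an isomorphism.

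Given a $d_H$-closed form $\f\in\cS^{0,m<n}_\infty[F;Y]$, its variational cohomology class thus corresponds to the class of some $d$-closed graded form $\si\in\cS^m_\infty[F;Y]$, meaning $\f=h_0\si + d_H\xi$ for some $\xi\in\cS^{0,m-1}_\infty[F;Y]$. At this point I invoke the last sentence of the proof of Theorem \ref{v9}, which splits $\si=\si_Y+d\eta$ with $\si_Y$ a closed exterior $m$-form on $Y$ and $\eta\in\cS^{m-1}_\infty[F;Y]$. Applying $h_0$ and using the cochain identity yields $h_0\si = h_0\si_Y + d_H(h_0\eta)$, so $\f = h_0\si_Y + d_H(\xi + h_0\eta)$, which is (\ref{g214}). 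The $\dl$-closed case for $L\in\cS^{0,n}_\infty[F;Y]$ runs along identical lines: the boundary entering $\cS^{0,n}_\infty[F;Y]$ in the variational complex is still $d_H$, so the same cohomology isomorphism combined with the Theorem \ref{v9} splitting produces (\ref{g215}). The final claim about odd $\dl$-closed densities is immediate: the body $Y$ is an ordinary smooth manifold, so every form on $Y$ is Grassmann-even; if $L$ is odd, the representative $\si_Y$ in (\ref{g215}) must vanish and $L=d_H\xi$.

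The main obstacle I anticipate is the bookkeeping involved in verifying the cochain-morphism property at the junction between $\cS^{0,n}_\infty[F;Y]$ and $\bE_1$, where the differential changes from $d_H$ to $\dl=\vr\circ d$; showing that $\vr\circ d$ applied to a graded $n$-form $\si$ reduces to $\vr\circ d_V$ of its $(0,n)$-component, using both $\vr\circ d_H=0$ and the degree-raising behavior of $d_V$, is the only nontrivial step. This parallels the corresponding even-variable argument in the proof scheme of Theorem \ref{g90}, so no genuinely new ideas should be required beyond the Grassmann-graded sign and parity tracking already handled in Lemmas \ref{0465} and \ref{g220}.
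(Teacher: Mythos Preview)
Your approach is correct but differs from the paper's. The paper compares the graded variational complex (\ref{g111}) to the \emph{even} one (\ref{b317'}) via the inclusion (\ref{v7}) and the body epimorphism (\ref{v7'}); since these form a retraction pair, the cohomology of (\ref{g111}) is immediately identified with the image of that of $\cO^*_\infty$, and Proposition \ref{t41} then finishes the argument in one line. You instead stay inside the graded calculus and run the graded analogue of the argument preceding Proposition \ref{t41}, mapping the de Rham complex (\ref{g110}) to (\ref{g111}) via $h_0$. This works, but your step ``the induced morphism of cohomology groups is forced to be an isomorphism'' does not follow merely from both cohomologies being abstractly equal to $H^*_{DR}(Y)$; you must check that the identifications in Theorems \ref{v9} and \ref{v11} are compatible with $h_0$ (both factor through the inclusion of closed forms on $Y$, which $h_0$ respects), or more cheaply invoke the body epimorphism (\ref{v7'}) to obtain a one-sided inverse. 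The paper's route is shorter because it recycles Proposition \ref{t41} wholesale; yours is more self-contained in the graded world but effectively re-derives that result. For the odd-density claim, the precise mechanism is that taking the odd component of (\ref{g215}) kills the even term $h_0\si$ and yields $L=d_H(\xi_{\mathrm{odd}})$, rather than that $\si$ literally vanishes, but your conclusion stands.
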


\begin{proof}
The complex (\ref{g111}) possesses the same cohomology as the
short variational complex
\mar{b317'}\beq
0\to\Bbb R\to \cO^0_\infty  \ar^{d_H}\cO^{0,1}_\infty \cdots
\op\ar^{d_H} \cO^{0,n}_\infty  \op\ar^\dl \bE_1 \label{b317'}
\eeq
of the DGA  $\cO^*_\infty$. The monomorphism (\ref{v7}) and the
body epimorphism (\ref{v7'}) yield the corresponding cochain
morphisms of the complexes (\ref{g111}) and (\ref{b317'}).
Therefore, cohomology of the complex (\ref{g111}) is the image of
the cohomology of $\cO^*_\infty$.
\end{proof}

The global exactness of the complex (\ref{g112}) at the term
$\cS^{1,n}_\infty[F;Y]$ results in the following \cite{cmp04}.

\begin{prop} \label{g103} \mar{g103}
Given a Grassmann-graded Lagrangian $L=\cL\om$, there is the
decomposition
\mar{g99,'}\ben
&& dL=\dl L - d_H\Xi,
\qquad \Xi\in \cS^{1,n-1}_\infty[F;Y], \label{g99}\\
&& \Xi=\op\sum_{s=0} \th^A_{\nu_s\ldots\nu_1}\w
F^{\la\nu_s\ldots\nu_1}_A\om_\la,\qquad F_A^{\nu_k\ldots\nu_1}=
\dr_A^{\nu_k\ldots\nu_1}\cL-d_\la F_A^{\la\nu_k\ldots\nu_1}
+h_A^{\nu_k\ldots\nu_1},  \label{g99'}
\een
where local graded functions $h$ obey the relations $h^\nu_a=0$,
$h_a^{(\nu_k\nu_{k-1})\ldots\nu_1}=0$.
\end{prop}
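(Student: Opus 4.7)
The plan is to deduce the global decomposition (\ref{g99}) from Theorem \ref{v11} and then extract the local expression (\ref{g99'}) by iterated integration by parts, in direct analogy with the even-case Corollary \ref{g93}. Because $L\in\cS^{0,n}_\infty[F;Y]$ already has top horizontal degree on the $n$-dimensional base $X$, one has $d_H L=0$, so $dL=d_V L\in\cS^{1,n}_\infty[F;Y]$. Theorem \ref{v11} supplies the exactness of the one-contact row (\ref{g112}) at the penultimate term, and since $\vr$ is an idempotent projector with image $\bE_1$, this exactness is equivalent to the $\Bbb R$-module splitting $\cS^{1,n}_\infty[F;Y]=\bE_1\oplus d_H(\cS^{1,n-1}_\infty[F;Y])$. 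Applied to $dL$ and using the definition $\dl=\vr\circ d$, the splitting immediately yields $dL=\vr(dL)+(dL-\vr(dL))=\dl L-d_H\Xi$ for some $\Xi\in\cS^{1,n-1}_\infty[F;Y]$, which is (\ref{g99}).

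For the local formula (\ref{g99'}), I would expand $dL=\sum_{|\La|\ge 0}\dr^\La_A\cL\,\th^A_\La\wedge\om$ and integrate by parts one multi-index slot at a time, using the identity $d_\la\th^A_{\La'}=\th^A_{\la+\La'}$ recorded in (\ref{0480}). Writing $\La=\la+\La'$, the elementary move transfers one derivative off the contact form onto the coefficient, replacing $\dr^{\la+\La'}_A\cL\,\th^A_{\la+\La'}\wedge\om$ by $-d_\la(\dr^{\la+\La'}_A\cL)\,\th^A_{\La'}\wedge\om$ modulo a $d_H$-exact remainder of the form $d_H(\dr^{\la+\La'}_A\cL\,\th^A_{\La'}\wedge\om_\la)$. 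Iterating on a multi-index of length $k$ produces, after $k$ passes, a contribution $(-1)^k d_\La(\dr^\La_A\cL)\,\th^A\wedge\om$ plus accumulated $d_H$-exact terms; summing over $|\La|$ recovers the Euler--Lagrange density coefficient of (\ref{0709'}) as the $\vr$-component, while the accumulated $d_H$-exact terms assemble, by reading off the coefficient of $\th^A_{\nu_s\ldots\nu_1}\wedge\om_\la$ in $\Xi$, into the recursion $F_A^{\nu_k\ldots\nu_1}=\dr_A^{\nu_k\ldots\nu_1}\cL-d_\la F_A^{\la\nu_k\ldots\nu_1}+h_A^{\nu_k\ldots\nu_1}$ of (\ref{g99'}). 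The recursion terminates because $\dr^\La_A\cL=0$ for $|\La|$ exceeding the order of $L$.

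The $h$-terms encode precisely the ambiguity in $\Xi$: at each integration-by-parts step one is free to modify $F$ by any tensor symmetric in the last two multi-index slots, since such a piece contracts trivially with $\th^A_{(\nu\m)\La}$ and so contributes nothing to $d_H\Xi$ in view of (\ref{g233}) and (\ref{0480}); the constraints $h^\nu_A=0$ and $h_A^{(\nu_k\nu_{k-1})\ldots\nu_1}=0$ select out exactly that symmetrization freedom. I expect the main obstacle to be the careful bookkeeping of Grassmann-parity signs $(-1)^{[A]}$ and multi-index symmetrizations when iterating the graded Leibniz rule, but the sign conventions (\ref{v21}) together with the vertical-contact identities (\ref{g233}) ensure that the even-case computation underlying Corollary \ref{g93} transports to $\cS^*_\infty[F;Y]$ without structural change, the graded-parity factors being absorbed into the conventions for $\th^A_\La$ and $\dr^\La_A$.
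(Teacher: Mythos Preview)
Your argument is correct and follows exactly the route the paper indicates: the paper derives Proposition~\ref{g103} from the global exactness of the one-contact row (\ref{g112}) established in Theorem~\ref{v11} (citing \cite{cmp04} for details), which is precisely the splitting $\cS^{1,n}_\infty[F;Y]=\bE_1\oplus d_H(\cS^{1,n-1}_\infty[F;Y])$ you invoke. Your integration-by-parts derivation of the local recursion (\ref{g99'}) and your identification of the $h$-terms with the non-uniqueness of $\Xi$ supply more detail than the paper itself gives, but are in full agreement with its treatment of the even case in Corollary~\ref{g93}.
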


Note that, locally, one can always choose $\Xi$ (\ref{g99'}) where
all functions $h$ vanish.

The decomposition (\ref{g99}  leads to the global first
variational formula for Grassmann-graded Lagrangians as follows
\cite{jmp05,cmp04}.

\begin{prop} \label{j44} \mar{j44}
Let $\vt\in\gd \cS^0_\infty[F;Y]$ be a contact graded derivation
(\ref{g105}) of the $\Bbb R$-ring $\cS^0_\infty[F;Y]$. Then the
Lie derivative $\bL_\vt L$ of a Lagrangian $L$ fulfills the first
variational formula
\mar{g107}\beq
\bL_\vt L= \vt_V\rfloor\dl L +d_H(h_0(\vt\rfloor \Xi_L)) + d_V
(\vt_H\rfloor\om)\cL, \label{g107}
\eeq
where $\Xi_L=\Xi+L$ is a Lepagean equivalent of $L$ given by the
coordinate expression (\ref{g99'}).
\end{prop}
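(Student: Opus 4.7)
The plan is to mimic the proof of the even first variational formula (Theorem \ref{g75}) but with the Grassmann-graded sign conventions. The required ingredients are all in place: the Cartan-type identity $\bL_\vt \f = \vt \rfloor d\f + d(\vt \rfloor \f)$, the decomposition of $dL$ provided by Proposition \ref{g103}, the relations (\ref{g233}) for vertical contact graded derivations, and the horizontal splitting (\ref{g105}).

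First I would write $\bL_\vt L = \vt \rfloor dL + d(\vt \rfloor L)$ and split $\vt = \vt_H + \vt_V$. Since $L = \cL\om$ is a horizontal graded density, $\vt_V \rfloor L = 0$, hence $\vt \rfloor L = \vt_H \rfloor L = \cL\,\vt_H\rfloor \om$. For the same reason $\vt_H\rfloor\th^A_\La = 0$, so $\vt_H$ annihilates the contact form $\Xi$. Writing $dL = d_V\cL \wedge \om$ (the horizontal piece $d_H\cL \wedge \om$ vanishes since $\om$ is of top horizontal degree) and using $\vt_H\rfloor d_V\cL = 0$ (since $d_V\cL$ is contact), one obtains
\[
\vt_H\rfloor dL = -(-1)^{[\vt][\cL]} d_V\cL \wedge (\vt_H\rfloor\om),
\]
where the sign is dictated by the graded Leibniz rule for $\rfloor$ given after (\ref{j40}). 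Expanding $d(\vt_H\rfloor L) = d_H(\cL\,\vt_H\rfloor\om) + d_V(\cL\,\vt_H\rfloor\om)$ and combining, the term $\pm d_V\cL\wedge \vt_H\rfloor\om$ cancels against the matching term from $d_V(\cL\,\vt_H\rfloor\om)$, producing
\[
\bL_\vt L = \vt_V\rfloor dL + d_H(\vt_H\rfloor L) + d_V(\vt_H\rfloor\om)\,\cL .
\]

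Next I insert Proposition \ref{g103}, writing $\vt_V\rfloor dL = \vt_V\rfloor \dl L - \vt_V\rfloor d_H\Xi$. Because $\vt_V$ is a vertical contact graded derivation (\ref{j40}), the first relation in (\ref{g233}) gives $\vt_V\rfloor d_H\Xi = -d_H(\vt_V\rfloor\Xi)$, so
\[
\bL_\vt L = \vt_V\rfloor \dl L + d_H\bigl(\vt_V\rfloor\Xi + \vt_H\rfloor L\bigr) + d_V(\vt_H\rfloor\om)\,\cL .
\]
It remains to recognize the $d_H$-argument as $h_0(\vt\rfloor\Xi_L)$. Since $\Xi$ is one-contact and $\vt_H$ annihilates contact forms, $\vt\rfloor\Xi = \vt_V\rfloor\Xi$, and the latter is already horizontal because $\vt_V\rfloor\th^A_\La$ is a graded function; hence $h_0(\vt\rfloor\Xi) = \vt_V\rfloor\Xi$. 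Similarly $h_0(\vt\rfloor L) = \vt_H\rfloor L$ because $L$ is horizontal. Adding gives $h_0(\vt\rfloor\Xi_L) = \vt_V\rfloor\Xi + \vt_H\rfloor L$, which yields (\ref{g107}).

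The only real obstacle is sign-bookkeeping: the graded interior product and the graded Leibniz rule carry parity-dependent signs $(-1)^{|\f|+[\f][\vt]}$ that the even proof does not, so I would carefully verify that each cancellation (especially the one producing the clean $d_V(\vt_H\rfloor\om)\,\cL$ term) goes through with these signs. Once that is checked, the derivation is formally identical to the even case, and one may, if desired, simply cite \cite{jmp05,cmp04} for the sign verification as indicated before Proposition \ref{j44}.
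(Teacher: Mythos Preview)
Your proposal is correct and follows precisely the route the paper takes: the paper's own proof consists of the single sentence ``The proof follows from the splitting (\ref{g99}) similarly to the proof of Proposition \ref{g75},'' and you have carried out exactly that program, step for step, with the appropriate attention to graded signs. Your closing remark about citing \cite{jmp05,cmp04} for the sign verification matches the paper's own pointer just before the statement of Proposition \ref{j44}.
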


\begin{proof}
The proof follows from the splitting (\ref{g99}) similarly to the
proof of Proposition \ref{g75}.
\end{proof}

A contact graded derivation $\vt$ (\ref{g105}) is called a
variational symmetry of a Lagrangian $L$ if the Lie derivative
$\bL_\vt L=d_H\xi$ is $d_H$-exact. A glance at the expression
(\ref{g107}) shows that: (i) a contact graded derivation $\vt$ is
a variational symmetry only if it is projected onto $X$, (ii)
$\vt$ is a variational symmetry iff its vertical part $\up_V$ is
well, (iii) it is a variational symmetry iff the density
$\up_V\rfloor \dl L$ is $d_H$-exact.

\begin{theo} \label{j45} \mar{j45} If a contact graded derivation $\vt$
(\ref{g105}) is a variational symmetry of a Lagrangian $L$, the
first variational formula (\ref{g8}) restricted to Ker$\,\dl L$
leads to the weak conservation law
\be
0\ap d_H(h_0(\vt\rfloor\Xi_L)-\xi).
\ee
\end{theo}

\begin{rem}
If $Y\to X$ is an affine bundle, one can consider the subalgebra
$\cP[F;Y]\subset \cS[F;Y]$ of graded differential forms whose
coefficients are polynomials in fiber coordinates of $Y\to X$ and
their jets. This subalgebra is also split into the
Grassmann-graded variational bicomplex. One can show that, the
cohomology of its short variational subcomplex as like as that of
the complex  (\ref{t45}) equals the de Rham cohomology of $X$.
\end{rem}

\section{Appendixes}

{\it Appendix A.} To show that $Y$ is a strong deformation retract
of $J^\infty Y$, let us construct a homotopy from $J^\infty Y$ to
$Y$ in an explicit form. Let $\g_{(k)}$, $k\leq 1$, be global
sections of the affine jet bundles $J^kY\to J^{k-1}Y$. Then, we
have a global section
\mar{am4}\beq
\g: Y \ni (x^\la,y^i)\to (x^\la,y^i, y^i_\La =\g_{(|\La|)}{}^i_\La
\circ\g_{(|\La|-1)}\circ \cdots \circ \g_{(1)}) \in J^\infty Y.
\label{am4}
\eeq
of the open surjection $\pi^\infty_0: J^\infty Y\to Y$. Let us
consider the map
\mar{am5}\ben
&&[0,1]\times J^\infty Y\ni (t; x^\la, y^i, y^i_\La) \to (x^\la,
y^i,y'^i_\La)\in J^\infty Y, \qquad 0<|\La|, \label{am5}\\
&& y'^i_\La= f_k(t)y^i_\La
+(1-f_k(t))\g_{(k)}{}^i_\La(x^\la,y^i,y^i_\Si), \qquad
|\Si|<k=|\La|, \nonumber
\een
where $f_k(t)$ is a continuous monotone real function on $[0,1]$
such that
\mar{am6}\beq
f_k(t)=\left\{
\begin{array}{ll}
0, & \quad t\leq 1-2^{-k},\\
1, & \quad t\geq 1-2^{-(k+1)}.
\end{array}\right. \label{am6}
\eeq
A glance at the transition functions (\ref{j3}) shows that,
although written in a coordinate form, this map is globally
defined. It is continuous because, given an open subset
$U_k\subset J^kY$, the inverse image of the open set
$(\pi^\infty_k)^{-1}(U_k)\subset J^\infty Y$, is the open subset
 \be
&& (t_k,1]\times (\pi^\infty_k)^{-1}(U_k) \cup (t_{k-1},1]\times
(\pi^\infty_{k-1})^{-1}(\pi^k_{k-1} [U_k\cap
\g_{(k)}(J^{k-1}Y)])\cup\cdots\\
&& \qquad \cup [0,1]\times (\pi^\infty_0)^{-1}(\pi^k_0 [U_k\cap
\g_{(k)}\circ\cdots\circ\g_{(1)}(Y)])
\ee
of $[0,1]\times J^\infty Y$, where $[t_r,1]={\rm supp}\,f_r$.
Then, the map (\ref{am5}) is a desired homotopy from $J^\infty Y$
to $Y$ which is identified with its image under the global section
(\ref{am4}).

\noindent {\it Appendix B.} We quote the following minor
generalization of the abstract de Rham theorem (\cite{hir},
Theorem 2.12.1) \cite{jmp,tak2}. Let
\be
0\to S\ar^h S_0\ar^{h^0} S_1\ar^{h^1}\cdots\ar^{h^{p-1}}
S_p\ar^{h^p} S_{p+1}, \qquad p>1,
\ee
be an exact sequence of sheaves of Abelian groups over a
paracompact topological space $Z$, where the sheaves $S_q$, $0\leq
q<p$, are acyclic, and let
\beq
0\to \G(Z,S)\ar^{h_*} \G(Z,S_0)\ar^{h^0_*}
\G(Z,S_1)\ar^{h^1_*}\cdots\ar^{h^{p-1}_*} \G(Z,S_p)\ar^{h^p_*}
\G(Z,S_{p+1}) \label{+130}
\eeq
be the corresponding cochain complex of sections of these sheaves.

\begin{theo} \label{+132} \mar{+132}
The $q$-cohomology groups of the cochain complex (\ref{+130}) for
$0\leq q\leq p$ are isomorphic to the cohomology groups $H^q(Z,S)$
of $Z$ with coefficients in the sheaf $S$.
\end{theo}

\bigskip

\noindent {\it Appendix C.} The sheaves $\gE_k$ in proof of
Theorem \ref{g90} are fine as follows \cite{jmp}. Though the $\Bbb
R$-modules $\G(\gE_{k>1})$ fail to be $\cQ^0_\infty$-modules
\cite{tul}, one can use the fact that the sheaves $\gE_{k>0}$ are
projections $\vr(\gQ^{k,n}_\infty)$ of sheaves of
$\cQ^0_\infty$-modules. Let $\{U_i\}_{i\in I}$ be a locally finite
open covering  of $J^\infty Y$ and $\{f_i\in\cQ^0_\infty\}$ the
associated partition of unity. For any open subset $U\subset
J^\infty Y$ and any section $\varphi$ of the sheaf
$\gQ^{k,n}_\infty$ over $U$, let us put $h_i(\varphi)=f_i\varphi$.
The endomorphisms $h_i$ of $\gQ^{k,n}_\infty$ yield the $\Bbb
R$-module endomorphisms
\be
\ol h_i=\vr\circ h_i: \gE_k\ar^{\rm in} \gQ^{k,n}_\infty \ar^{h_i}
\gQ^{k,n}_\infty \ar^\vr \gE_k
\ee
of the sheaves $\gE_k$. They possess the properties required for
$\gE_k$ to be a fine sheaf. Indeed, for each $i\in I$, ${\rm
supp}\,f_i\subset U_i$ provides a closed set  such that $\ol h_i$
is zero outside this set, while the sum $\op\sum_{i\in I}\ol h_i$
is the identity morphism.

\bigskip

\noindent {\it Appendix D.} Let the common symbol $D$ stand for
$d_H$ and $\dl$. Bearing in mind decompositions (\ref{t60}) --
(\ref{t42c}), it suffices to show that, if an element $\f\in
\cO^*_\infty$ is $D$-exact in the algebra $\cQ^*_\infty$, then it
is so in the algebra $\cO^*_\infty$. Lemma \ref{042} states that,
if $Y$ is a contractible bundle and a $D$-exact form $\f$ on
$J^\infty Y$ is of finite jet order $[\f]$ (i.e.,
$\f\in\cO^*_\infty$), there exists a differential form $\varphi\in
\cO^*_\infty$ on $J^\infty Y$ such that $\f=D\varphi$. Moreover, a
glance at the homotopy operators for $d_H$ and $\dl$ shows that
the jet order $[\varphi]$ of $\varphi$ is bounded by an integer
$N([\f])$, depending only on the jet order of $\f$. Let us call
this fact the finite exactness of the operator $D$. Given an
arbitrary bundle $Y$, the finite exactness takes place on
$J^\infty Y|_U$ over any domain $U\subset Y$. Let us prove the
following.

(i) Given a family $\{U_\al\}$ of disjoint open subsets of $Y$,
let us suppose that the finite exactness takes place on $J^\infty
Y|_{U_\al}$ over every subset $U_\al$ from this family. Then, it
is true on $J^\infty Y$ over the union $\op\cup_\al U_\al$ of
these subsets.

(ii) Suppose that the finite exactness of the operator $D$ takes
place on $J^\infty Y$ over open subsets $U$, $V$ of $Y$ and their
non-empty overlap $U\cap V$. Then, it is also true on $J^\infty
Y|_{U\cup V}$.

{\it Proof of (i)}. Let $\f\in\cO^*_\infty$ be a $D$-exact form on
$J^\infty Y$. The finite exactness on $(\pi^\infty_0)^{-1}(\cup
U_\al)$ holds since $\f=D\varphi_\al$ on every
$(\pi^\infty_0)^{-1}(U_\al)$ and $[\varphi_\al]< N([\f])$.

{\it Proof of (ii)}. Let $\f=D\varphi\in\cO^*_\infty$ be a
$D$-exact form on $J^\infty Y$. By assumption, it can be brought
into the form $D\varphi_U$ on $(\pi^\infty_0)^{-1}(U)$ and
$D\varphi_V$ on $(\pi^\infty_0)^{-1}(V)$, where $\varphi_U$ and
$\varphi_V$ are differential forms of bounded jet order. Let us
consider their difference $\varphi_U-\varphi_V$ on
$(\pi^\infty_0)^{-1}(U\cap V)$. It is a $D$-exact form of bounded
jet order $[\varphi_U-\varphi_V]< N([\f])$ which, by assumption,
can be written as $\varphi_U-\varphi_V=D\si$ where $\si$ is also
of bounded jet order $[\si]<N(N([\f]))$. Lemma \ref{am20} below
shows that $\si=\si_U +\si_V$ where $\si_U$ and $\si_V$ are
differential forms of bounded jet order on
$(\pi^\infty_0)^{-1}(U)$ and $(\pi^\infty_0)^{-1}(V)$,
respectively. Then, putting
\be
\varphi'|_U=\varphi_U-D\si_U, \qquad \varphi'|_V=\varphi_V+
D\si_V,
\ee
we have the form $\f$, equal to $D\varphi'_U$ on
$(\pi^\infty_0)^{-1}(U)$ and $D\varphi'_V$ on
$(\pi^\infty_0)^{-1}(V)$, respectively. Since the difference
$\varphi'_U -\varphi'_V$ on $(\pi^\infty_0)^{-1}(U\cap V)$
vanishes, we obtain $\f=D\varphi'$ on $(\pi^\infty_0)^{-1}(U\cup
V)$ where
\be
\varphi'\op=^{\rm def}\left\{
\begin{array}{ll}
\varphi'|_U=\varphi'_U, &\\
\varphi'|_V=\varphi'_V &
\end{array}\right.
\ee
is of bounded jet order $[\varphi']<N(N([\f]))$.

To prove the finite exactness of $D$ on $J^\infty Y$, it remains
to choose an appropriate cover of $Y$. A smooth manifold $Y$
admits a countable cover $\{U_\xi\}$ by domains $U_\xi$, $\xi\in
{\bf N}$, and its refinement $\{U_{ij}\}$, where $j\in {\bf N}$
and $i$ runs through a finite set, such that $U_{ij}\cap
U_{ik}=\emptyset$, $j\neq k$ \cite{greub}. Then $Y$ has a finite
cover $\{U_i=\cup_j U_{ij}\}$. Since the finite exactness of the
operator $D$ takes place over any domain $U_\xi$, it also holds
over any member $U_{ij}$ of the refinement $\{U_{ij}\}$ of
$\{U_\xi\}$ and, in accordance with item (i) above, over any
member of the finite cover $\{U_i\}$ of $Y$. Then by virtue of
item (ii) above, the finite exactness of $D$ takes place over $Y$.

\begin{lem} \label{am20}
Let $U$ and $V$ be open subsets of a bundle $Y$ and $\si\in
\gO^*_\infty$ a differential form of bounded jet order on
$(\pi^\infty_0)^{-1}(U\cap V)\subset J^\infty Y$. Then, $\si$ is
split into  a sum $\si_U+ \si_V$ of differential forms $\si_U$ and
$\si_V$ of bounded jet order on $(\pi^\infty_0)^{-1}(U)$ and
$(\pi^\infty_0)^{-1}(V)$, respectively.
\end{lem}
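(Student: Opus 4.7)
The plan is to use a smooth partition of unity on the open set $U\cup V\subset Y$ subordinate to the cover $\{U,V\}$, pull it back via $\pi^\infty_0:J^\infty Y\to Y$, and split $\sigma$ multiplicatively. Since $Y$ is paracompact, there exist $\rho_U,\rho_V\in C^\infty(U\cup V)$ with $\rho_U+\rho_V=1$ on $U\cup V$, with $\mathrm{supp}\,\rho_U$ closed in $U\cup V$ and contained in $U$, and symmetrically for $\rho_V$. Their pull-backs $\pi^\infty_0{}^*\rho_U$ and $\pi^\infty_0{}^*\rho_V$ are smooth functions of jet order zero on $(\pi^\infty_0)^{-1}(U\cup V)$.

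Next I would define
\be
\sigma_U:=(\pi^\infty_0{}^*\rho_V)\,\sigma \quad\text{on }(\pi^\infty_0)^{-1}(U\cap V),
\ee
and extend by zero to $(\pi^\infty_0)^{-1}(U)$. This extension is well defined and smooth because $(\pi^\infty_0)^{-1}(\mathrm{supp}\,\rho_V)$ is closed in $(\pi^\infty_0)^{-1}(U\cup V)$ and meets $(\pi^\infty_0)^{-1}(U)$ only inside $(\pi^\infty_0)^{-1}(U\cap V)$, so on the complement of this closed set the form is identically zero in a neighbourhood. Symmetrically, $\sigma_V:=(\pi^\infty_0{}^*\rho_U)\,\sigma$ extends by zero to $(\pi^\infty_0)^{-1}(V)$. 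On the overlap,
\be
\sigma_U+\sigma_V=(\pi^\infty_0{}^*(\rho_U+\rho_V))\,\sigma=\sigma,
\ee
which is the required decomposition.

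Finally, I would verify that the jet order stays bounded. Since $\pi^\infty_0{}^*\rho_U$ and $\pi^\infty_0{}^*\rho_V$ are of jet order zero, multiplying $\sigma$ by them produces forms whose jet order is bounded by $[\sigma]$; extension by zero off a closed set does not increase the jet order either. Hence $\sigma_U\in\gO^*_\infty$ on $(\pi^\infty_0)^{-1}(U)$ and $\sigma_V\in\gO^*_\infty$ on $(\pi^\infty_0)^{-1}(V)$, both of bounded jet order.

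The only delicate point is the smoothness of the zero extension of $(\pi^\infty_0{}^*\rho_V)\sigma$ across $(\pi^\infty_0)^{-1}(U\setminus V)$; this relies on the subordination $\mathrm{supp}\,\rho_V\subset V$ being a closed subset of $U\cup V$, which is the standard shrinking property for open covers on paracompact manifolds. No other obstacle is expected.
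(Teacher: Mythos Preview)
Your proposal is correct and essentially identical to the paper's own proof: the paper also takes a smooth partition of unity on $U\cup V$ subordinate to $\{U,V\}$, pulls back the function supported in $V$ along $\pi^\infty_0$, multiplies $\si$ by it to obtain $\si_U$ (extending by zero across $(\pi^\infty_0)^{-1}(U\setminus V)$), and takes the complementary product for $\si_V$, noting that the jet order of each piece does not exceed $[\si]$. Your closing remark about smooth extension across $U\setminus V$ is exactly the point the paper handles via ``$f$ is $0$ on a neighbourhood of $U-V$''.
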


\begin{proof}
By taking a smooth partition of unity on $U\cup V$ subordinate to
the cover $\{U,V\}$ and passing to the function with support in
$V$, one gets a smooth real function $f$ on $U\cup V$ which is 0
on a neighborhood of $U-V$ and 1 on a neighborhood of $V-U$ in
$U\cup V$. Let $(\pi^\infty_0)^*f$ be the pull-back of $f$ onto
$(\pi^\infty_0)^{-1}(U\cup V)$. The differential form
$((\pi^\infty_0)^*f)\si$ is 0 on a neighborhood of
$(\pi^\infty_0)^{-1}(U)$ and, therefore, can be extended by 0 to
$(\pi^\infty_0)^{-1}(U)$. Let us denote it $\si_U$. Accordingly,
the differential form $(1-(\pi^\infty_0)^*f)\si$ has an extension
$\si_V$ by 0 to $(\pi^\infty_0)^{-1}(V)$. Then, $\si=\si_U +\si_V$
is a desired decomposition because $\si_U$ and $\si_V$ are of the
jet order which does not exceed that of $\si$.
\end{proof}

\end{document}